\numberwithin{equation}{section} \makeatletter
\renewcommand{\subsection}{\@startsection
{subsection}{2}{0mm}{\baselineskip}{-0.25cm}
{\normalfont\normalsize\bf}} \makeatother
\newtheorem{theorem}{Theorem}[section]
\newtheorem{lemma}[theorem]{Lemma}
\newtheorem{corollary}[theorem]{Corollary}
\newtheorem{remark}[theorem]{Remark}
\def \F {\mathcal F}
\def \N {\mathcal N}
\def \L {\mathcal L}
\def \P {\mathbf P}
\def \Q {\mathbf Q}
\def \R {\mathbb R}
\def \E {\mathcal E}
\def \I {{\mathbf 1}}
\def \bF {\mathbb F}
\def \bN {\mathbb N}
\newcommand*\xbar[1]{%
  \hbox{%
    \vbox{%
      \hrule height 0.5pt 
      \kern0.5ex
      \hbox{%
        \kern-0.1em
        \ensuremath{#1}%
        \kern-0.1em
      }%
    }%
  }%
}
\newcommand{\ud}{\mathrm d}
\newcommand{\ds}{\displaystyle}
\newcommand{\esp}[2][\mathbb E] {#1\left[#2\right]}
\newcommand{\espp}[2][\mathbb E^{\widehat \P}] {#1\left[#2\right]}
\newcommand{\espbar}[2][\mathbb E^{\bar \P}] {#1\left[#2\right]}
\newcommand{\doleans}[1] {\mathcal E\left(#1\right)}
\begin{document}

\date{}

\author[A.~Cretarola]{Alessandra Cretarola}
\address{Alessandra Cretarola, Department of Mathematics and Computer Science,
 University of Perugia, Via Luigi Vanvitelli, 1, I-06123 Perugia, Italy.}\email{alessandra.cretarola@unipg.it}

\author[G.~Figà-Talamanca]{Gianna Figà-Talamanca}
\address{Gianna Figà Talamanca, Department of Economics,
University of Perugia, Via Alessandro Pascoli,
I-06123 Perugia, Italy.}\email{gianna.figatalamanca@unipg.it}

\author[M. Patacca]{Marco Patacca}
\address{Marco Patacca, Department of Economics,
University of Perugia, Via Alessandro Pascoli,
I-06123 Perugia, Italy.}\email{marco.patacca@studenti.unipg.it}

\date{This is an updated version (September 2017) of the paper \textit{A confidence-based model for asset and derivatives pricing in the BitCoin market}}


%
%
\title[A sentiment-based model for the BitCoin: theory, estimation and option pricing
]{A sentiment-based model for the BitCoin: theory, estimation and option pricing
}

\begin{abstract}
In recent literature it is claimed that BitCoin price behaves more likely to a volatile stock asset than a currency and that changes in its price are influenced by sentiment about the BitCoin system itself; in \citet{MainDrivers} the author analyses transaction based as well as popularity based potential drivers of the  BitCoin price finding positive evidence. Here, we endorse this finding and consider a bivariate model in continuous time to describe the price dynamics of one BitCoin as well as a second factor, affecting the price itself, which represents a sentiment indicator. We prove that the suggested model is arbitrage-free under a mild condition and, based on risk-neutral evaluation,  we obtain a closed formula to approximate the price of European style derivatives on the BitCoin. By applying the same approximation technique to the joint likelihood of a discrete sample of the bivariate process, we are also able to fit the model to market data. This is done by using both the \textit{Volume} and the number of \textit{Google searches} as possible proxies for the sentiment factor. Further, the performance of the pricing formula is assessed on a sample of market option prices obtained by the website deribit.com.
%
\end{abstract}

\maketitle

{\bf Keywords}: BitCoin, sentiment, stochastic models, equivalent martingale measure, option pricing, likelihood.



\section{Introduction}

The BitCoin was first introduced as an electronic payment system between peers. It is based on an open source software which generates a peer to peer network. This network includes a high number of computers connected to each other through the Internet and complex mathematical procedures are implemented both to check the truthfulness of the transaction and to generate new BitCoins. Opposite to traditional transactions, which are based on the trust in financial intermediaries, this system relies on the network, on the fixed rules and on cryptography. The open source software was created in 2009 by a computer scientist known under the pseudonym Satoshi Nakamoto, whose identity is still unknown. BitCoin has several attractive properties for consumers: it does not rely on central banks to regulate the money supply and it enables essentially anonymous transactions.  BitCoins can be purchased on appropriate websites that allow to change usual currencies in BitCoins. Further, payments can be made in BitCoins for several online services and goods and its use is increasing. Special applications have been designed for smartphones and tablets for transactions in BitCoins and some ATM have appeared all over the world (see Coin ATM radar) to change traditional currencies in BitCoins. At very low expenses it is also possible to send cryptocurrency internationally. However, the downside of BitCoin is that, due to anonymous transactions, it has been labeled as an exchange for organized crime and money laundering. It is worth to mention the recent Wannacry malware which last May has infected the informatic systems of many huge companies as well as thousands of computers around the world. The hackers which have spread this malware asked a ransom of 300 to 600 USD to be payed in BitCoins in order to get each computer rid of the infection.  
Besides, BitCoins can be only deposited in a digital wallet which is costly and possibly subject to hacking attacks, thefts and other issues related to cyber-security. 
In spite of all the above critics, BitCoin has experienced a rapid growth both in value and in the number of transactions. A number of competitors, so called alt-coins, have also appeared recently without reaching the popularity of BitCoin; the most successful among these is Ethereum.
Economic and financial aspects of BitCoin have been frequently addressed by financial blogs and by financial media but, until recently, researchers in Academia were primarily focused on the underlying technology and on safety and legal issues such as double spending.  
From an economic viewpoint, one of the main concerns about BitCoin is whether it should be considered a currency, a commodity or a stock. In \citet{Yermack}, the author performs a detailed qualitative analysis of BitCoin behavior. He remarks that a currency is usually characterized by three properties: a medium of exchange, a unit of account and a store of value. BitCoin is indeed a medium of exchange, though limited in relative volume of transactions and essentially restricted to online markets; however it lacks the other two properties. BitCoin value is rather volatile and traded for different prices in different exchanges, making it unreliable as a unit of account.   The conclusion in  \citet{Yermack} is that BitCoin behaves as a high volatility stock and that most transactions on BitCoins are aimed to speculative investments. 
In recent years several papers have also appeared in order to analyze which are the main drivers of its price evolution in time; many authors claim that the high volatility in BitCoin prices may depend on sentiment and popularity about the BitCoin market itself; of course sentiment and popularity on BitCoin are not directly observed  but several variables may be considered as indicators, from the more traditional volume or number of transactions to the number of Google searches or Wikipedia requests about the topic, in the period under investigation. Main references in this area are \citet{MainDrivers, GoogleTrends, SentimentAnalysis}. Alternatively, in \citet{bukovina2016sentiment} a sentiment measure related to the BitCoin system is obtained from the website Sentdex.com. This website collect data on sentiment through an algorithm, based on Natural Language Processing techniques,  which is capable of identifying string of words conveying positive, neutral or negative sentiment on a topic (BitCoin in this case). The authors of the paper develop a model in discrete time and show that excessive confidence on the system may boost a Bubble on the BitCoin price.  
Motivated by the evidences in the above quoted papers we introduce a bivariate model in continuous time to describe both the dynamics of a BitCoin sentiment indicator and of the corresponding BitCoin price.
From the theoretical viewpoint we give three contributions: the model is proven to be arbitrage-free under proper conditions and its statistical properties are investigated. Then, based on risk-neutral evaluation, a quasi-closed formula is derived for any European style derivative on the BitCoin. It is worth noticing that a market for derivatives on BitCoin has recently raised on appropriate websites such as  \emph{https://coinut.com} and \emph{https://deribit.com} trading European Calls and Puts as well as Binary option endorsing the idea in \citet{Yermack} that BitCoins are likely to be used for speculative purposes.
Further, the likelihood for a discrete sample of the model is computed and an approximated closed formula is derived so that maximum likelihood estimates can be obtained for model parameters. Precisely, we suggest a two-step maximum likelihood method, the profile likelihood described in \citet{Davison:statmod, Pawitan:allLik} to fit the model to market data. From the empirical viewpoint we contribute to the literature by fitting the suggested model to market data considering both the Volume and the number of Google searches as proxies for the sentiment factor.  Besides the performance of the pricing formula is assesses on a sample of market option prices obtained by the website \emph{https://deribit.com}.\\
The rest of the paper is structured as follows. In Section \ref{sec:model} we describe the model for the BitCoin price dynamics and show that the market is arbitrage-free under a mild condition. In Section \ref{sec:loglike} we compute the joint distribution of the discretely sampled model as well as a closed form approximation. Then, we propose a statistical estimation procedure based on the corresponding approximated profile likelihood. In Section \ref{sec:option-pricing} we prove a quasi-closed formula for European-style derivatives with detailed computations for Plain Vanilla and Binary option prices. Section \ref{sec:fit} is devoted to test the possible proxies for the sentiment indicator, such as the number and volume of BitCoin transactions or the internet searches on Google and Wikipedia and to apply the whole estimation procedure to market data obtained from \emph{http://blockchain.info}. In Section \ref{sec:assessing} we evaluate model performance for option pricing considering options traded on \emph{https://deribit.com} for some "test" days. Finally, in Section \ref{sec:remarks} we give some concluding remarks and draw directions for interesting future investigations. The Appendices collect a brief description of the Levy approximation approach and the proofs of most of the technical results.

\section{The BitCoin market model}\label{sec:model}

We fix a probability space $(\Omega,\F,\P)$ endowed with a filtration 
$\bF = \{\F_t,\ t \ge 0\}$ that satisfies the usual conditions of right-continuity and completeness. 
On the given probability space, we consider a main market in which heterogeneous agents buy or sell BitCoins and denote by 
$S = \{S_t,\ t \geq 0\}$ the price process
of the cryptocurrency. We assume that the BitCoin price dynamics is described by 
the following equation:
\begin{equation} \label{eq:S}
\ud S_t = \mu_S P_{t-\tau} S_t \ud t+\sigma_S \sqrt{P_{t-\tau}}S_t\ud W_t,\quad  S_0=s_0 \in \R_+,
\end{equation}
where $\mu_{S} \in \R \setminus \{0\}$, $\sigma_{S} \in \R_+,\ \tau \in \R_+$ represent model parameters;  
$W = \{W_t,\ t \ge 0\}$ is a standard $\bF$-Brownian motion on $(\Omega,\F,\P)$ and
$P = \{P_t,\ t \geq 0\}$ is a stochastic factor, representing the sentiment index in the BitCoin market, satisfying
\begin{equation}\label{eq:BSdyn}
\ud P_t =\mu_P P_t\ud t+\sigma_P P_t\ud Z_t, %
\quad P_t = \phi(t),\ t \in [-L,0].
\end{equation}
Here, $\mu_P \in \R \setminus \{0\}$, $\sigma_P \in \R_+$, $L \in \R_+$, $Z = \{Z_t,\ t \geq 0\}$ is a standard $\bF$-Brownian motion on $(\Omega,\F,\P)$, which is $\P$-independent of $W$,
and $\phi:[-L,0] \to [0,+\infty)$ is a continuous (deterministic) initial function. Note that, the non negative property of the function $\phi$ corresponds to require that the minimum level for sentiment is zero.
It is worth noticing that in \eqref{eq:BSdyn} we also consider the effect of the past, since we assume that the sentiment factor $P$ affects explicitly the BitCoin price $S_t$ up to a certain preceding time $t-\tau$. Assuming that $\tau<L$ and that factor $P$ is observed in the period $[-L,0]$ makes the bivariate model jointly feasible.
It is well-known that the solution of \eqref{eq:BSdyn} is available in closed form and that $P_t$ has a lognormal distribution for each $t > 0$, see \citet{black1973pricing}. 
Here, $P$ stands for an exogenous factor affecting the instantaneous variance of the BitCoin price changes modulated by $\sigma_S$. It is worth noticing that the instantaneous variance of the BitCoin  price process increases with the delayed process $P$; this may appear counter-intuitive if $P$ is interpreted  only as a \emph{positive sentiment} indicator. However, in our perspective, the factor $P$ is  mathematically a non-negative variable but does not necessary represent a \emph{positive sentiment} indicator. Examples are the volume or number of transactions; these are non negative values which increase with both short (fear) and long (enthusiasm) positions in BitCoins. Similarly, the number of internet searches within a fixed time period cannot go negative but internet searches may increase both with enthusiasm and fear about the BitCoin System, or whatever other financial asset we would like to model with the dynamics we suggest here. Hence, an increase in $P$ may be actually related to an increase in the uncertainty about the the price $S$. 
In order to visualize the dynamics implied by the model in equations \eqref{eq:S} and \eqref{eq:BSdyn}, we plot in Figure \ref{CambioTau} a possible simulated path of daily observations for the sentiment factor $P$ and the corresponding BitCoin prices $S$ within one year horizon by letting $\tau$ vary; as expected, market reaction to sentiment is delayed when $\tau$ increases.

\begin{figure}[htbp]
\includegraphics[width=13cm, height=6cm]{
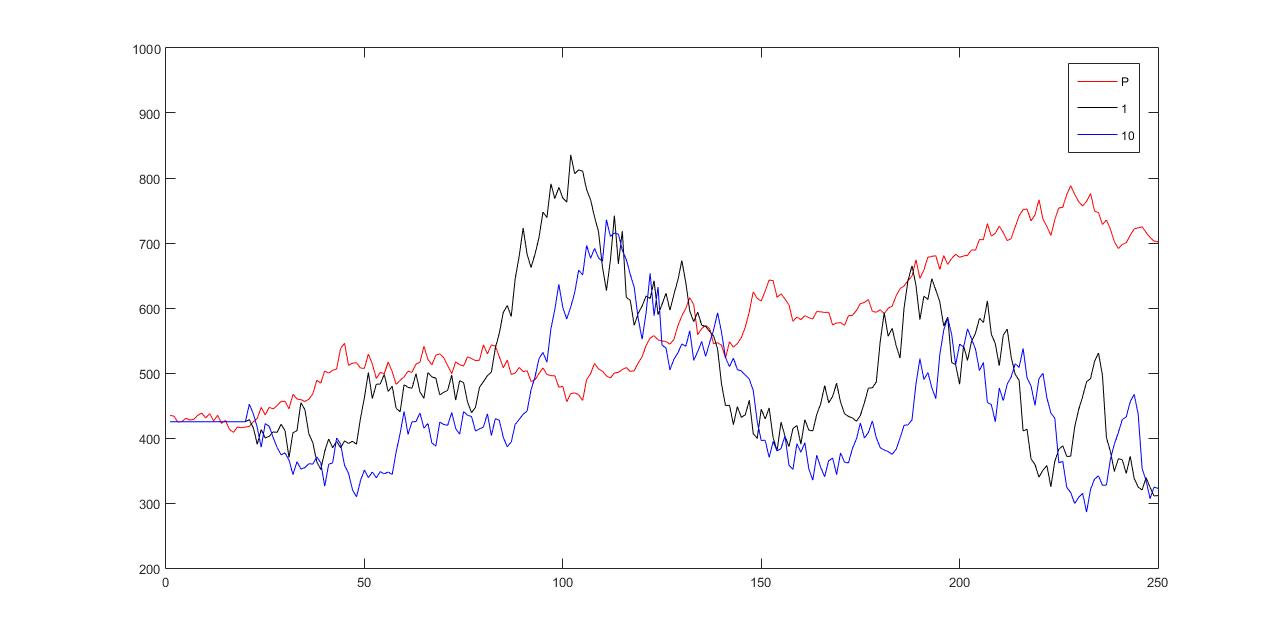} 
\caption{An example of BitCoin price dynamics given the evolution of the sentimente index (red): $\tau=1$ day (black), $\tau=10$ days (blue). Model parameters are set to $\mu_P=0.03,\sigma_P=0.35$, $\mu_S=10^{-5},\sigma_S=0.04$.}\label{CambioTau}
\end{figure}

\begin{remark}
The suggested model is motivated by the outcomes in \citet{MainDrivers, GoogleTrends, SentimentAnalysis,bukovina2016sentiment} where the authors relate the BitCoin price dynamics to some sentiment factors and does not take into account special features of the BitCoin market such as the underlying technology or the order mechanism. Besides, BitCoin is treated as a financial stock as suggested in \citet{Yermack}. Possibly, the model can be applied to any other financial assets whether one believes their price to depend on suitably identified sentiment indicators.
\end{remark}

We assume that the reference filtration $\bF=\{\F_t,\ t \geq 0\}$, describing the information on the BitCoin market, is of the form
$$
\F_t=\F_t^W \vee \F_t^Z, \quad t \ge 0,
$$
where $\F_t^W$ and $\F_t^Z$ denote the $\sigma$-algebras generated by $W_t$ and $Z_t$ respectively up to time $t \geq 0$. Note that $\F_t^Z=\F_t^P$, for each $t \geq 0$, with $\F_t^P$ being the $\sigma$-algebra generated by $P_t$ up to time $t \geq 0$.
Since at any time $t$ the BitCoin price dynamics is affected by the sentiment index only up to time $t-\tau$, to describe the traders information on the digital market, we consider
the filtration $\widetilde \bF=\{\widetilde \F_t,\ t \geq 0\}$, defined by
$$
\widetilde \F_t = \F_t^W  \vee \F_{t-\tau}^P, \quad t \geq 0.
$$
We also remark that all filtrations satisfy the usual conditions of completeness and right continuity (see e.g. \citet{protter2005stochastic}).
Now, we introduce the {\em integrated information process} 
$X^\tau=\{X_{t}^\tau,\ t \ge 0\}$ associated to the sentiment index $P$, defined as follows:
\begin{equation} \label{eq:int_info}
X_t^\tau  := \left\{ 
\begin{array}{ll}
\int_0^t P_{u-\tau} \ud u =\int_{-\tau}^{0} \phi(u) \ud u + \int_0^{t-\tau} P_u \ud u= X_\tau^\tau + \int_0^{t-\tau} P_u \ud u, & \quad 0 \leq \tau \leq t,\\
\int_{-\tau}^{t-\tau} \phi(u) \ud u, & \quad 0 \leq t \leq \tau.
\end{array}
\right.
\end{equation}
Note that, for $t \in [0,\tau]$, we have $X_t^\tau=\int_{-\tau}^{t-\tau} \phi(u) \ud u$ which is deterministic.
In addition, for a finite time horizon $T>0$, let us define the corresponding variation over the interval $[t,T]$, for $t\leq T$, as $X_{t,T}^\tau:=X_T^\tau-X_t^\tau$. Obviously, $X_{T,T}^\tau=0$; moreover, for $t<T$,
\begin{equation} \label{eq:int2_info}
X_{t,T}^\tau  := \left\{ 
\begin{array}{ll}
\int_{t-\tau}^{T-\tau} P_u \ud u & \quad  \mbox{ if }\ 0 \leq \tau \leq t < T, \\
\int_{t-\tau}^{0} \phi(u) \ud u + \int_0^{T-\tau} P_u \ud u & \quad \mbox{ if } \ 0 \leq t \leq \tau < T, \\
\int_{t-\tau}^{T-\tau} \phi(u) \ud u & \quad  \mbox{ if }\ 0\leq  t < T \leq \tau. \\
\end{array}
\right.
\end{equation}
Again, note that for $T\leq \tau$, we get $X_{t,T}^\tau=\int_{t-\tau}^{T-\tau} \phi(u) \ud u$ which is deterministic.

The following lemma establishes basic statistical properties for the integrated information process as well as for its variation in case they are not fully deterministic.
\begin{lemma}\label{th:means}
In the market model outlined above we have:
\begin{itemize} 
\item[(i)] For $t > \tau$,
\begin{align*}
\esp{X_t^\tau} & =
X_\tau^\tau +\frac{\phi(0)}{\mu_{P}}\left(e^{\mu_{P}(t-\tau)} -1\right);\\
\mathbb V{\rm ar}[X_t^\tau] & = 
\frac{2\phi^{2}(0)}{\left(\mu_{P}+\sigma_{P}^{2}\right)\left(2\mu_{P}+\sigma_{P}^{2}\right)}\left(e^{\left(2\mu_{P}+\sigma_{P}^{2}\right)(t-\tau)} -1\right)\\
& \qquad -\frac{2\phi^{2}(0)}{\mu_{P}\left(\mu_{P}+\sigma_{P}^{2}\right)}\left(e^{\mu_{P}(t-\tau)}-1\right)-\left(\frac{\phi(0)}{\mu_{P}}\left(e^{\mu_{P}(t-\tau)}-1\right)\right)^2.
\end{align*}

\item[(ii)] For $ \tau \leq t <  T $,
\begin{align*}
\esp{X_{t,T}^\tau} & = 
\frac{\phi(0)e^{\mu_{P}(t-\tau)}}{\mu_{P}}\left(e^{\mu_{P}(T-t)}-1\right);\\
\mathbb V{\rm ar}[X_{t,T}^\tau] & = 
\frac{2\phi^{2}(0)e^{\left(2\mu_{P}+\sigma_{P}^{2}\right)(t-\tau)}}{\left(\mu_{P}+\sigma_{P}^{2}\right)\left(2\mu_{P}+\sigma_{P}^{2}\right)}\left(e^{\left(2\mu_{P}+\sigma_{P}^{2}\right)(T-t)} -1\right)\\
& \qquad -\frac{2\phi^{2}(0)e^{\mu_{P}(t-\tau)}}{\mu_{P}\left(\mu_{P}+\sigma_{P}^{2}\right)}\left(e^{\mu_{P}(T-t)} -1\right)-\left(\frac{\phi(0)e^{\mu_{P}(t-\tau)}}{\mu_{P}}\left(e^{\mu_{P}(T-t)} -1\right)\right)^2.
\end{align*}

\item[(iii)] For $ t \leq \tau <  T $,
\begin{align*}
\esp{X_{t,T}^\tau} & = 
\int_{t-\tau}^0 \phi\left(u\right)\ud u + \frac{\phi(0)}{\mu_{P}}\left(e^{\mu_{P}(T-\tau)}-1\right);\\
\mathbb V{\rm ar}[X_{t,T}^\tau] & = 
\frac{2\phi^{2}(0)}{\left(\mu_{P}+\sigma_{P}^{2}\right)\left(2\mu_{P}+\sigma_{P}^{2}\right)}\left(e^{\left(2\mu_{P}+\sigma_{P}^{2}\right)(T-\tau)} -1\right)\\
& \qquad -\frac{2\phi^{2}(0)}{\mu_{P}\left(\mu_{P}+\sigma_{P}^{2}\right)}\left(e^{\mu_{P}(T-\tau)} -1\right)-\left(\frac{\phi(0)}{\mu_{P}}\left(e^{\mu_{P}(T-\tau)} -1\right)\right)^2.
\end{align*}
\end{itemize}
\end{lemma}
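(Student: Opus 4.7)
The plan is to reduce everything to two basic computations for the geometric Brownian motion $P$: its mean $\esp{P_u}=\phi(0)e^{\mu_P u}$ and its covariance kernel. Since $P_t=\phi(0)\exp\left(\left(\mu_P-\sigma_P^2/2\right)t+\sigma_P Z_t\right)$, a short computation using independent increments of $Z$ gives, for $0\le u\le s$,
\begin{equation*}
\esp{P_u P_s}=\esp{P_u^2}\,e^{\mu_P(s-u)}=\phi^{2}(0)\,e^{(2\mu_P+\sigma_P^2)u}\,e^{\mu_P(s-u)}.
\end{equation*}
These two identities are the only probabilistic ingredients; everything else is deterministic integration.

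Next I would deal with case (i). Writing $X_t^\tau=X_\tau^\tau+\int_0^{t-\tau}P_u\,\ud u$ with $X_\tau^\tau$ deterministic, Fubini's theorem yields $\esp{X_t^\tau}=X_\tau^\tau+\phi(0)\int_0^{t-\tau}e^{\mu_P u}\ud u$, which integrates to the stated formula. For the variance, I would compute the second moment by Fubini and the symmetry of the integrand:
\begin{equation*}
\esp{\left(\int_0^{t-\tau}P_u\,\ud u\right)^{2}}=2\int_0^{t-\tau}\!\!\int_0^{s}\esp{P_u P_s}\,\ud u\,\ud s,
\end{equation*}
then plug in the kernel above, integrate the inner exponential in $u$ on $[0,s]$, and finally integrate in $s$ on $[0,t-\tau]$. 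Subtracting $\left(\esp{X_t^\tau}-X_\tau^\tau\right)^{2}$ yields the claimed three-term expression; the deterministic shift $X_\tau^\tau$ drops out of the variance.

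Cases (ii) and (iii) are obtained by the same recipe with adjusted bounds. In (ii), $X_{t,T}^\tau=\int_{t-\tau}^{T-\tau}P_u\,\ud u$ is purely stochastic, so I would replace the lower limit $0$ by $t-\tau$ in both the mean and the double-integral computations; the exponential prefactor $e^{\mu_P(t-\tau)}$ (respectively $e^{(2\mu_P+\sigma_P^2)(t-\tau)}$) factors out naturally and produces the stated formulas. In (iii), $X_{t,T}^\tau=\int_{t-\tau}^{0}\phi(u)\,\ud u+\int_0^{T-\tau}P_u\,\ud u$: the first summand is a deterministic constant that shifts the mean and is annihilated by the variance, while the stochastic summand is exactly the object already analysed in (i) (with $t-\tau$ replaced by $T-\tau$).

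None of the steps is genuinely hard; the main source of error will be bookkeeping in the double-integral variance computation, in particular keeping track of the three constants $\mu_P$, $\mu_P+\sigma_P^2$ and $2\mu_P+\sigma_P^2$ that arise from integrating $e^{\mu_P s}\cdot\left(e^{(\mu_P+\sigma_P^2)s}-1\right)/(\mu_P+\sigma_P^2)$ over $s$ and then subtracting the square of the mean. I would therefore organise the variance calculation as a single identity in case (i) and simply record how the bounds and prefactors transform in (ii) and (iii), rather than redoing the integrals from scratch.
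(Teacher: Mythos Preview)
Your proposal is correct and follows essentially the same approach as the paper: compute $\esp{P_u}$ and the covariance kernel $\esp{P_uP_s}$ for the geometric Brownian motion, apply Fubini to get the mean of the integrated process, and use the symmetric double integral of the kernel to obtain the second moment and hence the variance, with the deterministic shifts handled exactly as you describe. The only minor difference is that for part~(ii) the paper invokes the self-similarity identity $\int_{t-\tau}^{T-\tau}P_u\,\ud u \stackrel{\text{law}}{=} P_{t-\tau}\int_0^{T-t}e^{(\mu_P-\sigma_P^2/2)r+\sigma_P Z_r}\,\ud r$ to reduce to the case already computed, whereas you simply redo the double integral with lower limit $t-\tau$ and factor out the exponential prefactors; both routes are equivalent and equally short.
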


In \cite{hull1987pricing} similar outcomes are claimed for $\tau=0$ without providing a proof; for the sake of clarity,  we give a self-contained proof in Appendix \ref{appendix:technical}.\\
The system given by equations \eqref{eq:S} and \eqref{eq:BSdyn}
is well-defined in $\R_+$ as stated in the following theorem, which also provides its explicit solution.
\begin{theorem} \label{th:sol}
In the market model outlined above, the followings hold:
\begin{itemize}
\item[(i)] the bivariate stochastic delayed differential equation 
\begin{equation}\label{eq:Bivdyn}
\left\{
\begin{array}{ll}
\ud S_t = \mu_S P_{t-\tau} S_t \ud t+\sigma_S \sqrt{P_{t-\tau}}S_t\ud W_t, \quad  S_0=s_0 \in \R_+,  \\
\ud P_t =\mu_P P_t\ud t+\sigma_P P_t\ud Z_t, \quad P_t = \phi(t),\ t \in [-L,0], \\
\end{array}
\right.
\end{equation}
has a continuous, $\bF$-adapted, unique solution $(S,P)=\{(S_t,P_t),\ t \geq 0\}$ given by
\begin{align}
S_t & =s_0e^{\left(\mu_{S}-\frac{\sigma_{S}^{2}}{2}\right)\int_0^t P_{u-\tau} \ud u+\sigma_S \int_0^t \sqrt{P_{u-\tau}}\ud W_u},\quad t \ge 0,\label{eq:sol_S}\\
P_t & = \phi(0)e^{\left(\mu_{P}-\frac{\sigma_{P}^{2}}{2}\right)t+\sigma_P Z_t}, \quad t \ge 0. \label{eq:sol_P}
\end{align}
More precisely, $S$ can be computed step by step as follows: for $k=0,1,2,\ldots$ and $t \in [k\tau,(k+1)\tau]$,
\begin{equation} \label{eq:sol_S1} 
S_t  = S_{k\tau}e^{\left(\mu_{S}-\frac{\sigma_{S}^{2}}{2}\right)\int_{k\tau}^t P_{u-\tau} \ud u+\sigma_S \int_ {k\tau}^t \sqrt{P_{u-\tau}}\ud W_u}. 
\end{equation} 
In particular, 
$P_t \ge 0$ $\P$-a.s. for all $t \geq 0$. If in addition, $\phi(0) > 0$, then $P_t > 0$ $\P$-a.s. for all $t \geq 0$.
\item [(ii)] Further, for every $t \ge 0$, the conditional distribution of $S_{t}$, given the integrated information $X_t^\tau$,
is log-Normal with mean $\log \left(s_0\right) + \left(\mu_{S}-\frac{\sigma_{S}^{2}}{2}\right)X_t^\tau$ and variance $\sigma_{S}^{2}X_t^\tau$.
\item [(iii)] Finally, for every $t \in [0,\tau]$, the random variable $\log \left( S_t \right)$ has mean $\log \left(s_0\right) + \left(\mu_{S}-\frac{\sigma_{S}^{2}}{2}\right)X_t^\tau$ and variance $\sigma_{S}^{2}X_t^\tau$; for every $t > \tau$,  $\log \left(S_t\right)$ has mean and variance respectively given by
\begin{align*}
\esp{\log \left( S_t \right)} & = \log \left(s_0\right) + \left(\mu_S-\frac{\sigma_S^2}{2}\right)  \esp{X_t^\tau};\\
\mathbb V{\rm ar}\left[{\log \left( S_t \right)}\right] & = \left(\mu_S-\frac{\sigma_S^2}{2}\right)^2\mathbb V{\rm ar}[X_t^\tau]+ \sigma_S^2 \esp{X_t^\tau},
\end{align*}
where $\esp{X_t^\tau}$ and $\mathbb V{\rm ar}[X_t^\tau]$ are both provided by Lemma \ref{th:means}, point (i).
\end{itemize}
\end{theorem}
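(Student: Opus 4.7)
The plan is to handle the three parts in order, using the fact that the sentiment process $P$ decouples from $S$ and can be solved directly, while $S$ is built up interval-by-interval using the delay structure.

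For part (i), I would first observe that \eqref{eq:BSdyn} for $P$ is a standard geometric Brownian motion driven by $Z$ with continuous deterministic initial function $\phi$ on $[-L,0]$; applying It\^o's formula to $\log P_t$ on $[0,\infty)$ yields the explicit form \eqref{eq:sol_P}, which is continuous, $\bF$-adapted, and strictly positive whenever $\phi(0)>0$ (and non-negative in general, since $P_t=\phi(t)\ge 0$ on $[-L,0]$). With $P$ in hand, I would solve \eqref{eq:S} inductively on the intervals $[k\tau,(k+1)\tau]$, $k=0,1,2,\ldots$. The key point is that on such an interval the coefficient $P_{t-\tau}$ is $\F_{k\tau}^P$-measurable as a process in $t$: for $k=0$ it equals the deterministic $\phi(t-\tau)$, and for $k\ge 1$ it is the already-constructed path of $P$ on $[(k-1)\tau,k\tau]$. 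Hence on each interval the SDE for $S$ is a linear SDE with (random but fixed) time-dependent coefficients, to which It\^o's formula applied to $\log S_t$ yields \eqref{eq:sol_S1}; concatenation produces \eqref{eq:sol_S}. Uniqueness follows from the same interval-by-interval argument and the standard uniqueness for linear SDEs with locally bounded coefficients, and a direct It\^o computation confirms that the proposed process indeed solves \eqref{eq:Bivdyn}.

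For part (ii), the strategy is to condition on the full path of $P$ (equivalently, on $\F_{t-\tau}^P$, under which $X_t^\tau$ becomes measurable). Since $W$ is $\P$-independent of $Z$ and hence of $\F_{t-\tau}^P$, the process $\{\int_0^s \sqrt{P_{u-\tau}}\,\ud W_u\}_{s\le t}$ conditioned on $\F_{t-\tau}^P$ is a Gaussian martingale with deterministic quadratic variation $\int_0^s P_{u-\tau}\,\ud u$; in particular its value at $s=t$ is $\N\!\left(0,X_t^\tau\right)$. Plugging this into \eqref{eq:sol_S} shows that, conditional on $\F_{t-\tau}^P$, $\log S_t$ is Gaussian with mean $\log(s_0)+(\mu_S-\sigma_S^2/2)X_t^\tau$ and variance $\sigma_S^2 X_t^\tau$. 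Since this conditional law depends on $\F_{t-\tau}^P$ only through the scalar $X_t^\tau$, the tower property yields that the same Gaussian law is in fact the conditional law of $\log S_t$ given $X_t^\tau$.

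For part (iii), on $[0,\tau]$ the integrated information $X_t^\tau=\int_{-\tau}^{t-\tau}\phi(u)\,\ud u$ is deterministic, so (ii) directly gives the stated mean and variance. For $t>\tau$, the unconditional moments follow by taking expectations in (ii): the mean identity is immediate from linearity of the conditional mean in $X_t^\tau$, and the variance identity is the law of total variance,
\begin{equation*}
\mathbb V\mathrm{ar}\!\left[\log S_t\right]
= \esp{\mathbb V\mathrm{ar}\!\left[\log S_t\mid X_t^\tau\right]}
+\mathbb V\mathrm{ar}\!\left[\esp{\log S_t\mid X_t^\tau}\right]
= \sigma_S^2\,\esp{X_t^\tau} + \left(\mu_S-\tfrac{\sigma_S^2}{2}\right)^{\!2}\mathbb V\mathrm{ar}[X_t^\tau],
\end{equation*}
with the moments of $X_t^\tau$ supplied by Lemma \ref{th:means}(i).

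The main obstacle is the rigorous justification in (i) that the interval-by-interval construction produces a genuine global solution to the delay SDE and not merely a piecewise definition; this requires checking the matching of $S$ at the nodes $t=k\tau$ and that $P_{t-\tau}$ is, on each interval, a sufficiently regular $\bF$-adapted integrand for both the drift and the It\^o integral. The conditioning argument in (ii) is delicate but essentially routine once the $\P$-independence of $W$ and $Z$ is invoked, and part (iii) is then a straightforward consequence.
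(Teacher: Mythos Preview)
Your proposal is correct and follows essentially the same route as the paper: solve $P$ as a geometric Brownian motion, construct $S$ by forward induction on the intervals $[k\tau,(k+1)\tau]$ via It\^o's formula applied to $\log S_t$, and establish (ii) by conditioning on $\F_{t-\tau}^P$ using the $\P$-independence of $W$ and $Z$ to make the It\^o integral conditionally Gaussian with variance $X_t^\tau$. The only minor difference is in (iii), where you invoke the law of total variance while the paper expands $\esp{\log^2(S_t)}-\esp{\log(S_t)}^2$ directly using $\esp{Y_t\mid \F_{t-\tau}^P}=0$; your argument is slightly cleaner but amounts to the same computation.
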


\begin{proof}
{\bf Point (i)}. 
Clearly, $S$ and $P$, given in \eqref{eq:sol_S} and \eqref{eq:sol_P} respectively, are $\bF$-adapted processes with continuous trajectories.
Similarly to \citet[Theorem 2.1]{mao2013delay}, we provide existence and uniqueness of a strong solution to the pair of stochastic differential equations in
system \eqref{eq:Bivdyn} by using forward induction steps of length $\tau$,
without the need of checking any assumptions on the coefficients, e.g. the local Lipschitz condition and the linear growth condition.

First, note that the second equation in the system \eqref{eq:Bivdyn} does not depend on $S$, and its solution is well known for all $t \ge 0$. Clearly, equation \eqref{eq:sol_P} says that $P_t \ge 0$ $\P$-a.s. for all $t \ge 0$ and that $\phi(0) > 0$ implies that the solution $P$ remains strictly greater than $0$ over $[0,+\infty)$, i.e. $P_t > 0$, $\P$-a.s. for all $t \ge 0$.\\
Next, by the first equation in 
\eqref{eq:Bivdyn} and applying It\^o’s formula to $\log \left(S_t\right)$, we get
\begin{equation} \label{eq:logS}
\ud\log \left(S_t\right) =\left(\mu_S-\frac{\sigma_S^2}{2}\right)P_{t-\tau} \ud t + \sigma_S\sqrt{P_{t-\tau}} \ud W_t,
\end{equation}
or equivalently, in integral form
\begin{equation} \label{eq:log_int}
\log\left(\frac{S_{t}}{s_{0}}\right) = \left(\mu_S-\frac{\sigma_S^2}{2}\right)\int_0^t P_{u-\tau} \ud u + \sigma_S\int_0^t\sqrt{P_{u-\tau}} \ud W_u,\quad t \ge 0.
\end{equation}
For $t \in [0,\tau]$, \eqref{eq:log_int} can be written as
\begin{equation} \label{eq:log1}
\log\left(\frac{S_{t}}{s_{0}}\right) = \left(\mu_S-\frac{\sigma_S^2}{2}\right)\int_0^t \phi \left(u-\tau \right) \ud u + \sigma_S\int_0^t\sqrt{\phi \left(u-\tau \right)} \ud W_u,
\end{equation}
that is, \eqref{eq:sol_S1} holds for $k = 0$.

Given that $S_t$ is now known for $t \in [0,\tau]$, we may restrict the first equation in \eqref{eq:Bivdyn} on $t \in [\tau, 2\tau]$, so that it corresponds to consider \eqref{eq:logS} for $t \in [\tau, 2\tau]$.
Equivalently, in integral form,
\begin{equation} \label{eq:log}
\log\left(\frac{S_{t}}{S_{\tau}}\right) = \left(\mu_S-\frac{\sigma_S^2}{2}\right)\int_\tau^t P_{u-\tau} \ud u + \sigma_S\int_\tau^t\sqrt{P_{u-\tau}} \ud W_u.
\end{equation}
This shows that \eqref{eq:sol_S1} holds for $k = 1$.
Similar computations for $k=2,3,\ldots$, give the final result.

{\bf Point (ii)}. 
Set  $Y_t:=\int_0^t\sqrt{\phi \left( t-\tau \right)} \ud W_u $, for $t \in [0,\tau]$ and $Y_t:= Y_{k\tau}+ \int_{k\tau}^t\sqrt{P_{u-\tau}} \ud W_u $, for $t \in [k\tau,(k+1)\tau]$, with $k=1,2,\ldots$. Then, by applying the outcomes in Point (i) and the decomposition
$$
\log\left(\frac{S_{t}}{s_{0}}\right) =\log\left(\frac{S_{t}}{S_{k\tau}}\right)+\sum_{j=0}^{k-1}\log\left(\frac{S_{(j+1)\tau}}{S_{j\tau}}\right), 
$$
for $t\in [k\tau,(k+1)\tau]$, with $k=1,2,\ldots$, we can write
\begin{equation} \label{eq:LOG}
\log\left(S_{t}\right)= \log(s_0)+\left(\mu_S-\frac{\sigma_S^2}{2}\right)X_t^\tau +\sigma_S Y_t,\quad t \ge 0.
\end{equation}
To complete the proof, 
it suffices to show that, for each $t\ge 0$ the random variable $Y_t$, conditional on $X_t^\tau$, is Normally distributed with mean $0$ and variance $X_t^\tau$. This is straightforward from \eqref{eq:log1} if $t \in [0,\tau]$. Otherwise, we first observe that
since $Z_{u-\tau}$ is independent of $W_u$ for every $\tau < u \leq t$, the distribution of $Y_t$, conditional on $\{Z_{u-\tau}:\ \tau < u \leq t-\tau\}=\{P_{u-\tau}:\ \tau < u \leq t-\tau\} 
=\F_{t-\tau}^P$, is Normal with mean $0$ and variance $\sigma_S^2X_t^\tau$. \\
Now, for each $t > \tau$, the moment-generating function of $Y_t$, conditioned on the history of the process $P$ up to time $t-\tau$, is given by
\begin{align*}
\esp{e^{aY_t}\Big{|}\F_{t-\tau}^{P}}
& = e^{\int_0^t \frac{a^2}{2} P_{u-\tau} \ud u} =  e^{\frac{a^2}{2}  \int_0^t  P_{u-\tau} \ud u}\\
& = e^{\frac{a^2}{2}\left(\sqrt{X_t^\tau}\right)^2}, \quad  a \in \R,
\end{align*}
that only depends on its integrated information $X_t^\tau$ up to time $t$, that is,  
$$
\esp{e^{aY_t}\Big{|}\F_{t-\tau}^{P}}=\esp{e^{aY_t}\Big{|}X_t^\tau}, \quad t > \tau.
$$

{\bf Point (iii)}. The proof is trivial for $t \in [0,\tau]$. If $t >\tau$, \eqref{eq:LOG} and Lemma \ref{th:means} together with the null-expectation property of the It\^o integral, 
give
\begin{align*}
\esp{\log \left(S_t\right)} & = \log \left(s_0\right) + \left(\mu_S-\frac{\sigma_S^2}{2}\right) \esp{X_t^\tau}\\
& = \log \left(s_0\right) + \left(\mu_S-\frac{\sigma_S^2}{2}\right)\left(X_\tau^\tau  +\frac{\phi(0)}{\mu_{P}}\left(e^{\mu_{P}(t-\tau)} -1\right)\right).
\end{align*}
Now, we compute the variance of $\log \left(S_t\right)$. 
Since for each $t > \tau$ the random variable $Y_t $ has mean $0$ conditional on $\F_{t-\tau}^P$, we have
\begin{align*}
\mathbb V{\rm ar}\left[\log \left(S_t\right)\right] & = \esp{\log^2\left(S_t\right)} - \left(\esp{\log\left(S_t\right)}\right)^2\\
& = \left(\mu_S-\frac{\sigma_S^2}{2}\right)^2\esp{(X_t^\tau)^2}  + 2\left(\mu_S-\frac{\sigma_S^2}{2}\right)\esp{X_t^\tau \esp{Y_t\Big{|}\F_{t-\tau}^P}}\\
& \qquad \qquad + \sigma_S^2 \esp{X_t^\tau} - \left(\mu_S-\frac{\sigma_S^2}{2}\right)^2 \esp{X_t^\tau}^2\\
& = \left(\mu_S-\frac{\sigma_S^2}{2}\right)^2\mathbb V{\rm ar}[X_t^\tau]+ \sigma_S^2 \esp{X_t^\tau}.
\end{align*}

Thus, the proof is complete.
\end{proof}

\subsection{Existence of a risk-neutral probability measure}
Let us fix a finite time horizon $T>0$ 
and assume the existence of a riskless asset, say the money market account, whose value process $B=\{B_t,\ t \in [0,T]\}$ 
is given by
$$
B_t=e^{\int_0^t r(s)\ud s},\quad t \in [0,T],
$$
where $r:[0,T] \to \R$ is a bounded, deterministic function 
representing the instantaneous risk-free interest rate.
To exclude arbitrage opportunities, we 
need to check that the set of all equivalent martingale measures for the BitCoin price process $S$ is non-empty. More precisely, it contains more than a single element, since $P$ does not represent the price of any tradeable asset, and therefore the underlying market model is incomplete.
\begin{lemma}\label{lem:measure}
Let $\phi(t) > 0$, for each $t \in [-L,0]$, in \eqref{eq:BSdyn}. Then, every equivalent 
martingale measure $\Q$ for $S$ defined on $(\Omega,\F_T)$ has the following density 
\begin{equation}\label{def:Q}
\frac{\ud \Q}{\ud \P}\bigg{|}_{\F_T}=:L_T^\Q, \quad \P-\mbox{a.s.},
\end{equation} 
where $L_T^\Q$ is the terminal value of the $(\bF,\P)$-martingale $L^\Q=\{L_t^\Q,\ t \in [0,T]\}$ given by
\begin{equation} \label{eq:L}
L_t^\Q :=\doleans{-\int_0^\cdot \frac{\mu_S P_{s-\tau}-r(s)}{\sigma_S\sqrt{P_{s-\tau}}} \ud W_s - \int_0^\cdot\gamma_s\ud Z_s}_t, \quad t \in [0,T], 
\end{equation}
for a suitable $\bF$-progressively measurable process $\gamma=\{\gamma_t,\ t \in [0,T]\}$.
\end{lemma}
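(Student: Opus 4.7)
The plan is to combine the martingale representation theorem for the two-dimensional Brownian filtration with Girsanov's theorem. Since $\bF$ is generated by the independent $\P$-Brownian motions $W$ and $Z$, every strictly positive $(\bF,\P)$-martingale can be written as a Dol\'eans--Dade exponential driven by $W$ and $Z$. Hence the density process of any equivalent measure $\Q \sim \P$ on $\F_T$ admits the representation $L^\Q = \doleans{-\int_0^\cdot \alpha_s \ud W_s - \int_0^\cdot \gamma_s \ud Z_s}$ for some $\bF$-progressively measurable processes $\alpha$ and $\gamma$, and the task reduces to showing that the $\Q$-martingale property of the discounted price $\widetilde S := S/B$ pins down $\alpha$ to the form appearing in \eqref{eq:L}, while leaving $\gamma$ unconstrained.

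To execute this, I would fix such a candidate $\Q$ with density process $L^\Q$ as above and apply Girsanov's theorem to conclude that $W^\Q_t := W_t + \int_0^t \alpha_s \ud s$ and $Z^\Q_t := Z_t + \int_0^t \gamma_s \ud s$ are independent $(\bF,\Q)$-Brownian motions on $[0,T]$. Substituting $\ud W_t = \ud W^\Q_t - \alpha_t \ud t$ into \eqref{eq:S} and applying It\^o's formula to $\widetilde S_t = S_t / B_t$ yields
\[
\ud \widetilde S_t = \bigl(\mu_S P_{t-\tau} - r(t) - \sigma_S \sqrt{P_{t-\tau}}\,\alpha_t\bigr)\widetilde S_t \ud t + \sigma_S \sqrt{P_{t-\tau}}\,\widetilde S_t \ud W^\Q_t.
\]
For $\widetilde S$ to be a $\Q$-local martingale the drift must vanish $\ud t \otimes \ud\P$-a.e., and since the hypothesis $\phi>0$ together with Theorem \ref{th:sol}(i) guarantees $P_{t-\tau}>0$ $\P$-a.s.\ for all $t \in [0,T]$, this equation can be solved uniquely for $\alpha_t$, giving
\[
\alpha_t = \frac{\mu_S P_{t-\tau} - r(t)}{\sigma_S \sqrt{P_{t-\tau}}},
\]
which matches the $W$-integrand in \eqref{eq:L}. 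No constraint is imposed on $\gamma$, reflecting the fact that $P$ is not the price of a tradeable asset and the resulting market is incomplete.

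The main technical step is the initial representation: since $L^\Q$ is a strictly positive $(\bF,\P)$-martingale on the Brownian filtration generated by $(W,Z)$, the martingale representation theorem delivers $\ud L^\Q_t = H^1_t \ud W_t + H^2_t \ud Z_t$ for some progressively measurable $H^1,H^2$, and setting $\alpha_t := -H^1_t/L^\Q_t$ and $\gamma_t := -H^2_t/L^\Q_t$ produces the Dol\'eans--Dade exponential form required in \eqref{eq:L}. The positivity of $L^\Q$ (from $\Q\sim\P$) and of $P_{t-\tau}$ (from the assumption $\phi>0$ on $[-L,0]$) are both essential: the former legitimizes dividing by $L^\Q_t$ in the definition of $(\alpha,\gamma)$, while the latter legitimizes dividing by $\sqrt{P_{t-\tau}}$ in the final expression for $\alpha_t$.
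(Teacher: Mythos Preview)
Your argument is correct and, in fact, establishes the necessity direction that the lemma literally asserts: starting from an arbitrary equivalent martingale measure $\Q$, you invoke the martingale representation theorem on the two-dimensional Brownian filtration to write the density process as a Dol\'eans--Dade exponential, then apply Girsanov and force the $\Q$-drift of $\widetilde S$ to vanish, which pins down the $W$-integrand and leaves $\gamma$ free. The paper's own proof proceeds in the opposite direction: it fixes the candidate $L^\Q$ in \eqref{eq:L}, verifies that it is a \emph{true} $(\bF,\P)$-martingale (using the conditional normality of $\int_0^T \alpha_u\,\ud W_u$ given $\F_{T-\tau}^P$, since $\alpha$ is $\F^P$-adapted and $W$ is independent of $Z$), and then checks via Girsanov that $\widetilde S$ is a genuine $(\bF,\Q)$-martingale. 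Thus the paper's argument supplies existence and non-emptiness of the EMM set---which is what is needed for the no-arbitrage claim immediately following the lemma---while your argument supplies the parametrization as stated. The two proofs are complementary rather than redundant; together they give the full characterization.
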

The proof is postponed to Appendix \ref{appendix:technical}. Here $\E (Y)$ denotes the Doleans-Dade exponential of an $(\bF, \P)$-semimartingale $Y$.

In the rest of the paper, suppose that $\phi(t) > 0$, for each $t \in [-L,0]$, in \eqref{eq:BSdyn}.
Then, Lemma \ref{lem:measure} ensures that the space of equivalent martingale measures for $S$ is described by 
\eqref{eq:L}. More precisely, it is
parameterized by the process $\gamma$
which governs the change of drift of the  $(\bF,\P)$-Brownian motion $Z$. Note that
the sentiment factor dynamics under $\Q$ in the BitCoin market is given by
$$
\ud P_t  = (\mu_P - \sigma_P \gamma_t)P_t\ud t + \sigma_P P_t \ud Z_t^\Q, \quad P_t = \phi(t),\ t \in [-L,0].
$$
The process $\gamma$ 
can be interpreted as the risk perception associated to the future direction or future possible movements of the BitCoin market.
One simple example of a candidate equivalent martingale measure is the so-called {\em minimal martingale measure} (see e.g. \citet{follmer1991hedging}, \citet{follmer2010minimal}), 
denoted by $\widehat \P$, whose density process $L = \{L_t,\ t \in  [0,T]\}$, is given by
\begin{equation} \label{eq:Lp}
L_t := e^{-\int_0^t \frac{\mu_S P_{s-\tau}-r(s)}{\sigma_S\sqrt{P_{s-\tau}}} \ud W_s - \frac{1}{2}\int_0^t\left(\frac{\mu_S P_{s-\tau}-r(s)}{\sigma_S\sqrt{P_{s-\tau}}}\right)^2\ud s},\quad t \in [0,T]. 
\end{equation}

This is the probability measure which corresponds to the choice $\gamma \equiv 0$ in \eqref{eq:L}. 
Intuitively, under the minimal martingale measure, say $\widehat \P$, 
the drift of the Brownian motion driving the BitCoin price process $S$ is modified to make $S$ an $(\bF,\widehat \P)$-martingale, while the drift of the Brownian motion which is strongly orthogonal to $S$ is not affected by the change measure from $\P$ to $\widehat \P$. More precisely, under the change of measure from $\P$ to $\widehat \P$, we have two independent $(\bF,\widehat \P)$-Brownian motions $\widehat W=\{\widehat W_t,\ t \in [0,T]\}$ and $\widehat Z=\{\widehat Z_t,\ t \in [0,T]\}$ defined respectively by
\begin{align}
\widehat W_t & := W_t + \int_0^t\frac{\mu_S P_{s-\tau}-r(s)}{\sigma_S\sqrt{P_{s-\tau}}} \ud s,\quad t \in [0,T],\\
\widehat Z_t & := Z_t, \quad t \in [0,T]. \label{def:hat_Z}
\end{align}
Denote by  $\widetilde S_t=\{\widetilde S_t,\ t \in [0,T]\}$ the discounted BitCoin price process 
defined as $\ds \widetilde S_t:=\frac{S_t}{B_t}$, for each $t \in [0,T]$.
Then, on the probability space $(\Omega,\F,\widehat \P)$, the pair $(\widetilde S,P)$ satisfies the following system of stochastic delayed differential equations:
\begin{equation}\label{eq:Bivdyn-MMM}
\left\{
\begin{array}{ll}
\ud \widetilde S_t = \sigma_S \sqrt{P_{t-\tau}}\widetilde S_t\ud \widehat W_t, \quad  \widetilde S_0=s_0 \in \R_+,  \\
\ud P_t =\mu_P P_t\ud t+\sigma_P P_t\ud Z_t, \quad P_t = \phi(t),\ t \in [-L,0], \\
\end{array}
\right.
\end{equation}

By Theorem \ref{th:sol}, point (i), the explicit expression of the solution to \eqref{eq:Bivdyn-MMM}, which provides the 
discounted BitCoin price $\widetilde S_t$, at any time $t \in [0,T]$, is given by
\begin{equation}\label{def:tilde_S}
\widetilde S_t=s_0e^{\sigma_S\int_0^t\sqrt{P_{u-\tau}}\ud \widehat W_u - \frac{\sigma_S^2}{2}\int_0^t P_{u-\tau} \ud u},\quad t \in [0,T],
\end{equation}
with the representation of the sentiment factor $P$ still provided by \eqref{eq:sol_P}.

Under our assumptions, the dynamics of the (non-discounted) BitCoin price under the minimal martingale measure is given by 
\begin{equation}\label{eq:Bivdyn-Q}
\left\{
\begin{array}{ll}
\ud S_t = r(t) \ud t + \sigma_S \sqrt{P_{t-\tau}} S_t\ud \widehat W_t, \quad  S_0=s_0 \in \R_+,  \\
\ud P_t =\mu_P P_t\ud t+\sigma_P P_t\ud Z_t, \quad P_t = \phi(t),\ t \in [-L,0], \\
\end{array}
\right.
\end{equation}
where $r(t)$ is the risk-free interest rate at time $t$.
The above dynamics was assumed in \citet{hull1987pricing} to describe price changes for a stock and its instantaneous variance (for which $\sigma_S=1$ and $\tau=0$ by definition). However, the authors assumed from the very beginning a risk-neutral framework without defining the dynamics under the physical measure and with no proof of the existence of any equivalent martingale measure. 



In Section \ref{sec:option-pricing}, we derive option pricing formulas via the risk-neutral evaluation procedure based on the minimal martingale measure above defined. As usual, pricing formulas depend on model parameters which have to be estimated on market data. A common approach, when a closed formula for option is available, is the so called calibration of parameters; their value is obtained in order to minimize a proper distance between model an market prices for options. However, this method is particularly of interest when there is a standardized and liquid market for options. 
Of course, this is not the case for the BitCoin so we will fit the model directly to a time series of BitCoin prices with a more classical statistical procedure based on the approximation of the probability density function of a discrete sample for model described by equations \eqref{eq:S} and \eqref{eq:BSdyn}. To this end we need to know the dynamics of the BitCoin price under the physical measure and derive statistical properties for a discrete sample of the process $P$ given in \eqref{eq:BSdyn}.

\section{Statistical properties of discretely observed quantities and parameter estimation} \label{sec:loglike}

In this section, we introduce basic statistical properties for a sample of discretely observed prices and suggest a possible closed form approximation for the joint probability density of the discrete sample.

Let us fix a discrete observation step $\Delta$ and consider the discrete time process $\{S_i,\ i \in \bN\}$, where $S_i:=S_{i\Delta}$. Define the corresponding logarithmic returns process $\{R_i,\ i\in \bN \}$ as
\begin{equation}
\label{eq:Ri} 
R_i=\log (S_{i})-\log (S_{i-1}).
\end{equation}
By \eqref{eq:LOG}, we get
\begin{equation} \label{def:R}
R_i= \left(\mu_S-\frac{\sigma_S^2}{2}\right)\int_{(i-1)\Delta}^{i\Delta} P_{u-\tau} du + \sigma_S \int_{(i-1)\Delta}^{i\Delta} \sqrt{P_{u-\tau}} \ud W_u, \quad i \in \bN.
\end{equation}
Setting $Y_t:=\int_0^t \sqrt{P_{u-\tau}} \ud W_u$, with $t \in [0,T]$,
as in the proof of Theorem \ref{th:sol}, we define  
\begin{equation}
\label{eq:Yi}
M_i:=Y_{i\Delta}-Y_{(i-1)\Delta}=\int_{(i-1)\Delta}^{i\Delta} \sqrt{P_{u-\tau}} \ud W_u,\,\,\, \forall i \in \bN,
\end{equation}
so that, \eqref{def:R} can be written as
\begin{equation}
R_i= \left(\mu_S-\frac{\sigma_S^2}{2}\right)A_i^\tau + \sigma_S M_i, \quad i \in \bN,
\end{equation}
where $A_i^\tau:=X_{(i-1)\Delta,\; i\Delta}^\tau$, with $X_{t,T}^\tau$ being the variation of the integrated information process introduced in \eqref{eq:int_info}; since $\tau$ is fixed we omit hereafter the dependence on it and, without loss of generality we assume $\tau < \Delta$ so that $A_1=X_\tau^\tau + \int_0^{\Delta-\tau} P_u \ud u$. Note that if $j\Delta \leq \tau<(j+1)\Delta $ the quantities $A_1,\ldots,A_j$ are deterministic and the outcomes in what follows still hold if $A_1$ is replaced by the first non deterministic value $A_{j+1}$. 
 
Let us consider a finite time horizon $T=n\Delta$; under model assumptions the conditional probability distribution of the vector $\mathbf{M}=\left(M_1,M_2,\dots,M_n\right)$, given the vector $\mathbf{A}=\left(A_1,A_2,\dots,A_n\right)$, is a multi-variate normal with covariance matrix $Diag(A_1,A_2,\dots,A_n)$. Hence, the vector of discretely observed logarithmic returns $\mathbf{R}=\left(R_1,R_2,\dots,R_n\right)$, conditionally on $\mathbf{A}$, is jointly normal with covariance matrix  $\Sigma=\sigma_S^2 Diag(A_1,A_2,\dots,A_n)$.
%
The application of Bayes's rule allows to write the unconditional joint probability distribution of $\left(\mathbf{R},\mathbf{A}\right)$, i.e. the density function $f_{\left(\mathbf{R},\mathbf{A}\right)}:\R^n \times \R_+^n \longrightarrow \R$ as
\begin{equation}\label{aprdens}
f_{\left(\mathbf{R},\mathbf{A}\right)}(\mathbf{r},\mathbf{a})=f_{A_1}(a_1)\prod_{i=2}^nf_{\left(A_i|A_{i-1}\right)}(a_i) \prod_{i=1}^n \frac{1}{\sqrt{2\pi \sigma_S^2 a_i}} e^{-\frac{1}{2}\frac{\left(r_i-\left(\mu_S-\frac{\sigma_S^2}{2}\right)a_i\right)^2}{\sigma_S^2 a_i}}.
\end{equation}
with $\mathbf{r}=\left(r_1,r_2,\dots,r_n\right) \in \R^n$ and ,
$\mathbf{a}=\left(a_1,a_2,\dots,a_n\right) \in \R_+^n$.

The probability distribution functions $f_{A_1}(.)$ and  $f_{A_i|A_{i-1}}(.) \mbox{ for } i=2,3,\dots,n$ are not available in closed form; though, several approximations exist among which those introduced in \citet{levy1992pricing} and \citet{ MilPosner}. Of course any approximation available for such densities can be applied in order to find a closed formula approximating the joint density $f_{\left(\mathbf{R},\mathbf{A}\right)}\left(\mathbf{r},\mathbf{a}\right)$; in what follows we adopt the one suggested in \citet{levy1992pricing}, see Appendix \ref{appendix:levy} for further details.  Note that the inverse gamma approach suggested in \citet{MilPosner} holds in the limit when $T$ tends to infinity, a condition which is not at all consistent with the applications we have in mind; further discussion on the approximating distribution to select is beyond the scope of our paper.

\subsection{The approximated likelihood}
One of the pillar in statistical inference is the maximum likelihood (in short ML) estimation approach where model parameters are estimated so as to maximize the probability of the the realized sample to be extracted randomly;  the likelihood function  shares the same mathematical expression of the probability density function but it is computed "ex-post" when a realization of involved random variables is available and assuming the underlying model parameters to be unknown. 
It is well known that ML estimates are consistent and asymptotically normal and they achieve efficiency, i.e. they have the lowest variance among estimators sharing the same asymptotic properties (see \citet{Davison:statmod}). 

By applying the approximation of \citet{levy1992pricing}, we prove the following Lemma. 
\begin{lemma}\label{lemma:distr}
Let $\phi(t) > 0$, for each $t \in [-L,0]$, in \eqref{eq:BSdyn} and $\tau<\Delta$. Then, in the market model outlined in Section \ref{sec:model}, we have
\begin{itemize}
\item[(i)] the distribution of $A_1-X_\tau^\tau$ is approximated by a log-normal with mean $\alpha_1$ and variance $\nu_1^2$ given by 
\begin{gather}
\alpha_1 = \log \phi(0) + 2\log \frac{e^{\mu_P(\Delta-\tau)}-1}{\mu_P} -\frac{1}{2}\log\left( \frac{2}{\mu_P+\sigma_P^2} \left[ \frac{e^{(2\mu_P+\sigma_P^2)(\Delta-\tau)}-1}{2\mu_P+\sigma_P^2}-\frac{e^{\mu_P(\Delta-\tau)}-1}{\mu_P}   \right] \right) \\
\nu_1^2=\log \left( \frac{2}{\mu_P+\sigma_P^2} \left[ \frac{e^{(2\mu_P+\sigma_P^2)(\Delta-\tau)}-1}{2\mu_P+\sigma_P^2}-\frac{e^{\mu_P(\Delta-\tau)}-1}{\mu_P} \right] \right)-2\log \left(\frac{e^{\mu_P(\Delta-\tau)}-1}{\mu_P} \right)
\end{gather}
\item[(ii)] the distribution of $A_i$ given $A_{i-1}$ (shortly $A_i|A_{i-1}$), for $i=1,\dots,n$, is approximated by a log-normal with means $\alpha_i$ and variances $\nu_i^2$ given by 
\begin{gather}
\alpha_i=\log \left(A_{i-1}\right)+ \left(\mu_P-\frac{\sigma_P^2}{2}\right)\Delta, \quad \mbox{  for } i=1,\dots,n,\\
\nu_i^2=\sigma_P^2\Delta, \quad \mbox{  for } i=1,\dots,n.
\end{gather}
\end{itemize}

\end{lemma}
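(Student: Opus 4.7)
The key tool is the Levy (1992) moment-matching approximation recalled in Appendix B, which fits a log-normal to the integral of a geometric Brownian motion by requiring agreement of the first two moments.

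\medskip

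\emph{Part (i).} Because $X_\tau^\tau=\int_{-\tau}^0\phi(u)\,\ud u$ is a deterministic constant, $A_1-X_\tau^\tau=\int_0^{\Delta-\tau}P_u\,\ud u$. Specializing Lemma \ref{th:means}(i) to $t=\Delta$, I would first read off
\begin{equation*}
m:=\esp{A_1-X_\tau^\tau}=\frac{\phi(0)}{\mu_P}\bigl(e^{\mu_P(\Delta-\tau)}-1\bigr),\qquad \mathbb V{\rm ar}[A_1-X_\tau^\tau]=\mathbb V{\rm ar}[X_\Delta^\tau],
\end{equation*}
and then form $\esp{(A_1-X_\tau^\tau)^2}=m^2+\mathbb V{\rm ar}[A_1-X_\tau^\tau]$. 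The $-m^2$ present in the variance formula of Lemma \ref{th:means}(i) cancels the $+m^2$ from this identity, leaving the clean expression
\begin{equation*}
\esp{(A_1-X_\tau^\tau)^2}=\frac{2\phi^2(0)}{\mu_P+\sigma_P^2}\biggl[\frac{e^{(2\mu_P+\sigma_P^2)(\Delta-\tau)}-1}{2\mu_P+\sigma_P^2}-\frac{e^{\mu_P(\Delta-\tau)}-1}{\mu_P}\biggr].
\end{equation*}
To close the argument I would impose the log-normal matching conditions $e^{\alpha_1+\nu_1^2/2}=m$ and $e^{2\alpha_1+2\nu_1^2}=\esp{(A_1-X_\tau^\tau)^2}$, solve the resulting linear system as $\nu_1^2=\log\esp{(A_1-X_\tau^\tau)^2}-2\log m$ and $\alpha_1=\log m-\nu_1^2/2$, split $\log m=\log\phi(0)+\log\bigl[(e^{\mu_P(\Delta-\tau)}-1)/\mu_P\bigr]$, and factor $2\phi^2(0)$ out of the bracket above; the stated expressions for $\alpha_1$ and $\nu_1^2$ then fall out directly.

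\medskip

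\emph{Part (ii).} For $i\ge 2$, $A_i=\int_{(i-1)\Delta-\tau}^{i\Delta-\tau}P_u\,\ud u$. I would exploit the Markov property of the geometric Brownian motion $P$: conditional on $P_{(i-1)\Delta-\tau}$, the shifted process $\{P_{(i-1)\Delta-\tau+s}\}_{s\in[0,\Delta]}$ is itself a geometric Brownian motion with parameters $\mu_P,\sigma_P$ and initial value $P_{(i-1)\Delta-\tau}$. Lemma \ref{th:means}(i) applied on an interval of length $\Delta$, combined with Levy's moment matching, then gives that $A_i$, conditional on $P_{(i-1)\Delta-\tau}$, is approximately log-normal with location $\log P_{(i-1)\Delta-\tau}+c$ and variance $d$, for constants $c,d$ depending only on $\Delta,\mu_P,\sigma_P$; the same statement holds for $A_{i-1}$ with $P_{(i-2)\Delta-\tau}$ replacing $P_{(i-1)\Delta-\tau}$. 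Dividing and keeping only the leading Markov term yields $A_i/A_{i-1}\approx P_{(i-1)\Delta-\tau}/P_{(i-2)\Delta-\tau}$, and the closed form \eqref{eq:sol_P} identifies this ratio as $\exp\bigl((\mu_P-\sigma_P^2/2)\Delta+\sigma_P(Z_{(i-1)\Delta-\tau}-Z_{(i-2)\Delta-\tau})\bigr)$, a log-normal with parameters $(\mu_P-\sigma_P^2/2)\Delta$ and $\sigma_P^2\Delta$; this is precisely the claim.

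\medskip

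\emph{Main obstacle.} Part (i) is a direct moment-matching computation whose only subtle point is noticing the cancellation of $m^2$ when converting the variance into the second moment. The delicate issue lies in part (ii): Levy's procedure naturally delivers the conditional law of $A_i$ given $P_{(i-1)\Delta-\tau}$, whereas the statement requires the law given $A_{i-1}$. The approximation $A_i/A_{i-1}\approx P_{(i-1)\Delta-\tau}/P_{(i-2)\Delta-\tau}$ -- equivalently, treating the integral of $P$ over a window of length $\Delta$ as proportional to the value of $P$ at the window's left endpoint through a window-independent constant -- is the crucial simplification that causes the idiosyncratic Levy noise from consecutive windows to cancel and leaves only the exact GBM increment, and it is this step on which the tractability of the likelihood really rests.
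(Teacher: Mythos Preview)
Your proposal is correct and, for part~(i), follows the paper's argument essentially verbatim: identify $A_1-X_\tau^\tau=\int_0^{\Delta-\tau}P_u\,\ud u$, pull the first two moments from Lemma~\ref{th:means}, and solve the two moment-matching equations for $\alpha_1,\nu_1^2$.

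For part~(ii) the paper takes a slightly different organizational route. Rather than conditioning on $P_{(i-1)\Delta-\tau}$ first and then forming the ratio $A_i/A_{i-1}$, the paper works with \emph{unconditional} moments: from the scaling decomposition $A_i\stackrel{d}{=}P_{(i-1)\Delta}\cdot(X_\tau^\tau+IP(\Delta-\tau))$ it deduces the recursions $\esp{A_i}=\esp{A_{i-1}}e^{\mu_P\Delta}$ and $\esp{A_i^2}=\esp{A_{i-1}^2}e^{(2\mu_P+\sigma_P^2)\Delta}$, which translate immediately into $\alpha_i=\alpha_{i-1}+(\mu_P-\sigma_P^2/2)\Delta$ and $\nu_i^2=\nu_{i-1}^2+\sigma_P^2\Delta$. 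The final ``conditioning to $A_{i-1}$'' step then replaces $\alpha_{i-1}$ by $\log A_{i-1}$ and $\nu_{i-1}^2$ by $0$. Your approach and the paper's are two faces of the same heuristic: both exploit the GBM scaling property and both implicitly treat the window-specific Levy noise as common across consecutive windows. Your version makes that approximation more visible (you flag it explicitly as the main obstacle), while the paper's moment-recursion route gets to the stated formulas a touch more mechanically without isolating where the slack enters.
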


The proof is postponed to Appendix \ref{appendix:technical}.
Now, we are in the position to state the following theorem.
\begin{theorem}\label{likelihood}
Under the same assumptions of Lemma \ref{lemma:distr}, given the realized sample $\left(\bar{\mathbf{r}},\bar{\mathbf{a}}\right)$, the log-likelihood function $\log{\L_\mathbf{R,A}(\mu_P,\mu_S,\sigma_P,\sigma_S)}:\R^2 \times \R_+^2 \longrightarrow \R$ can be approximated by
\begin{equation}\label{eq:likli}
\begin{split}
\log{\L_\mathbf{R,A}(\mu_P,\mu_S,\sigma_P,\sigma_S)} & =\sum_{i=1}^n\left[\log\left(\frac{1}{\sqrt{2\pi \sigma_S^2 a_i}}\right)-\frac{1}{2}\frac{\left(r_i-\left(\mu_S-\frac{\sigma_S^2}{2}\right)a_i\right)^2}{\sigma_S^2 a_i}\right]\\
& +\sum_{i=1}^n\left[\log\left( \frac{1}{a_i\nu_i\sqrt{2\pi}}\right) -\frac{\left(\log (a_i)-\alpha_i \right)^2}{2\nu_i^2}\right],
\end{split}
\end{equation}
where upper case letters are used for random variables and lowercase for the corresponding realizations.
\end{theorem}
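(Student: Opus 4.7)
The plan is to start from the joint density decomposition in \eqref{aprdens} and reduce everything to taking logarithms. The factorization in \eqref{aprdens} already isolates three pieces: the marginal $f_{A_1}(a_1)$, the one-step conditional densities $f_{A_i \mid A_{i-1}}(a_i)$ for $i=2,\dots,n$, and the Gaussian conditional density of $R_i$ given $A_i$. The last piece is exact under the model (it is just the conditional normal distribution of $\mathbf{R}$ given $\mathbf{A}$ derived before the statement), so taking $\log$ of that product immediately produces the first sum on the right-hand side of \eqref{eq:likli}. No approximation enters here; it is just the log of a product of Gaussian densities with variances $\sigma_S^2 a_i$.

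Next I would plug in the log-normal approximations provided by Lemma \ref{lemma:distr} for the densities of $A_1$ and of $A_i \mid A_{i-1}$. Recall that if a positive random variable $X$ is log-normal with log-mean $\alpha$ and log-variance $\nu^2$, its density is
\begin{equation*}
f_X(x) = \frac{1}{x\,\nu\sqrt{2\pi}}\exp\!\left(-\frac{(\log x-\alpha)^2}{2\nu^2}\right),
\end{equation*}
so $\log f_X(x) = \log\!\bigl(1/(x\nu\sqrt{2\pi})\bigr) - (\log x-\alpha)^2/(2\nu^2)$. Substituting the $(\alpha_i,\nu_i^2)$ supplied by Lemma \ref{lemma:distr} into this expression, one term per $i$, and summing over $i=1,\dots,n$ produces exactly the second sum in \eqref{eq:likli}. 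Adding the two contributions gives the claimed formula.

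The only non-routine bookkeeping concerns the $i=1$ term. Lemma \ref{lemma:distr}(i) is stated for $A_1-X_\tau^\tau$ rather than $A_1$, but since $X_\tau^\tau$ is deterministic (see the discussion after \eqref{eq:int_info}), the density of $A_1$ is merely the translated log-normal density and in the log-likelihood the shift is absorbed into the argument $a_1$ (equivalently, one can treat $a_1$ in the statement as shorthand for the realized value of $A_1-X_\tau^\tau$). I would flag this convention explicitly to match the indexing used in the statement.

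I do not expect a genuine mathematical obstacle: once Lemma \ref{lemma:distr} is accepted, the argument is the direct combination of (a) Bayes' rule / the Markov structure already exploited in \eqref{aprdens}, (b) the exact conditional normality of $\mathbf{R}$ given $\mathbf{A}$, and (c) the standard log-normal density formula. The subtle work has therefore been pushed entirely into Lemma \ref{lemma:distr} and the Levy-type approximation recalled in Appendix \ref{appendix:levy}; the present theorem is the clean packaging of those ingredients into an approximate log-likelihood suitable for maximization.
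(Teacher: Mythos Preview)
Your proposal is correct and follows essentially the same approach as the paper: take the logarithm of the joint density \eqref{aprdens}, keep the exact Gaussian part for $\mathbf{R}\mid\mathbf{A}$, and replace $f_{A_1}$ and $f_{A_i\mid A_{i-1}}$ by the log-normal densities from Lemma~\ref{lemma:distr}. Your explicit remark about the $A_1$ versus $A_1-X_\tau^\tau$ convention is a useful clarification that the paper itself glosses over.
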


\begin{proof}
First, recall that the likelihood function of a parameter corresponds to the probability density function where random variables are replaced by they realizations and parameters are unknown. Then,
by simply applying the logarithmic function to \eqref{aprdens} we get
\begin{equation}
\label{eq:lik1}
\begin{split}
\log{f_{(\mathbf{R,A})}(\mathbf{r},\mathbf{a})}& =\log{\left(\prod_{i=1}^n \left[ \frac{1}{\sqrt{2\pi \sigma_S^2 a_i}}e^{-\frac{1}{2}\frac{\left(r_i-\left(\mu_S-\frac{\sigma_S^2}{2}\right)a_i\right)^2}{\sigma_S^2 a_i}} \right] f_{A_1}(a_1)\prod_{i=2}^nf_{A_i|A_{i-1}}(a_i)\right)}\\
	& =\sum_{i=1}^n\left[\log\left(\frac{1}{\sqrt{2\pi \sigma_S^2 a_i}}\right)-\frac{1}{2}\frac{\left(r_i-\left(\mu_S-\frac{\sigma_S^2}{2}\right)a_i\right)^2}{\sigma_S^2 a_i}\right]+\log \left(f_{A_1}(a_1)\right) \\
	& \qquad \qquad +\sum_{i=2}^n\log \left(f_{A_i|A_{i-1}}(a_i)\right).
\end{split}
\end{equation}
Replacing the unknown densities in \eqref{eq:lik1} according to Lemma \ref{lemma:distr} gives the desired result.

\end{proof}

Maximum likelihood estimates for the model can be obtained by maximizing the log-likelihood approximation in \eqref{eq:likli} i.e.
\begin{equation}
\label{eq:maxLik}
(\widehat{\mu}_P,\widehat{\mu}_S,\widehat{\sigma}_P,\widehat{\sigma}_S)=\arg\max_{\substack{\mu_P,\mu_S \\ \sigma_P,\sigma_S}}\log{\L_\mathbf{R,A}(\mu_P,\mu_S,\sigma_P,\sigma_S)}.
\end{equation}
In this case the methodology is referred to as Quasi-Maximum likelihood since the exact expression of the likelihood is not available; under suitable conditions, quasi-maximum likelihood estimates are asymptotically equivalent to the maximum likelihood estimates, see e.g. \citet{White, Gourieroux}. We also performed a simulation study to assess finite sample behavior of the estimates. 

It is worth to stress that the above estimation method does not assume the process $P$ to be observed, as far as $X_\tau^\tau$ and $A_i$, $i \geq 1$ are observed (note that $A_i$ is the cumulative of $P$ along the time interval $[(i-1)\Delta-\tau,i\Delta-\tau]$).

\subsection{Finite sample behavior of QML estimates}\label{sec:liksimulstudy}
In order to check the goodness of the log-likelihood approximation introduced in Theorem \ref{likelihood}, we apply the proposed estimation method to simulated data and assume, for the sake of simplicity, $\tau=0$. We simulate $m$ samples of length $n$ for the processes in \eqref{eq:S} and \eqref{eq:BSdyn} assuming a constant finer observation step $\delta$; we extract corresponding samples for $\mathbf{R,A}$ at a lower frequency, with observation step $\Delta=r\delta$.  In the numerical exercise we choose $n=730$, $m=1000$, $\delta=\frac{1}{365}$ (daily observations), $\Delta=7\delta$ (weekly observations); parameters values are set as $\mu_P=2$, $\sigma_P=0.5$ $\mu_S=0.05$, $\sigma_S=0.3$. We end with $1000$ samples of $104$ observations for $\left(\mathbf{R,A}\right)$; for each sample we estimate the parameters by means of the quasi-maximum likelihood as suggested in previous subsection.
The results are summed up in Table \ref{tab:fitpar}.

\begin{table}[htbp]
\caption{Parameter fit with simulated data of QML method.}	
\centering
\captionsetup{justification=centering}
\label{tab:fitpar}
\begin{tabular}{lrrrrrr}
	\toprule
    Variable    & Theor. value & Fitted value  &  Std. error &  t-value & P($>\lvert t \rvert$) & RMSE\\ 
	  \midrule
	$\mu_P$		& 2.0000 &	1.9759	& 0.3675 &	-0.0655	& 0.9478 & 11.6475 \\
	$\sigma_P$	& 0.5000 &	0.4089	& 0.0302 &	-3.0170	& 0.0026 &	3.0358 \\
	$\mu_S$		& 0.0500 &	0.0497	& 0.0112 &	-0.0297	& 0.9763 &	0.3544 \\
	$\sigma_S$	& 0.3000 &	0.2978	& 0.0210 &	-0.1032	& 0.9178 &	0.6687\\
	\bottomrule
	\end{tabular} 
\end{table}

We also performed a $t$-test in order to check for estimation bias. The fitted values of $\mu_P,\mu_S,\sigma_S$ are close in mean to their theoretical value and with a reasonable standard deviation; the $p$-values  of the t-test confirm that estimated are not biased. Different conclusions are in order as for parameter $\sigma_P$ which estimations is by no doubt biased. In Figure \ref{fig:hist_par} we plot the histograms of the estimated as well as the fitted normal distribution and the expected mean of the asymptotic distribution. Pictures confirm the biasedness of the estimator for $\sigma_P$ but all other estimates perform well and outcomes may become better by increasing the sample length. 
The simulation exercise have been repeated by letting the parameters values, the number and the sample length vary obtaining analogous qualitative results. 

\begin{figure}[htbp]
\includegraphics[scale=0.38]{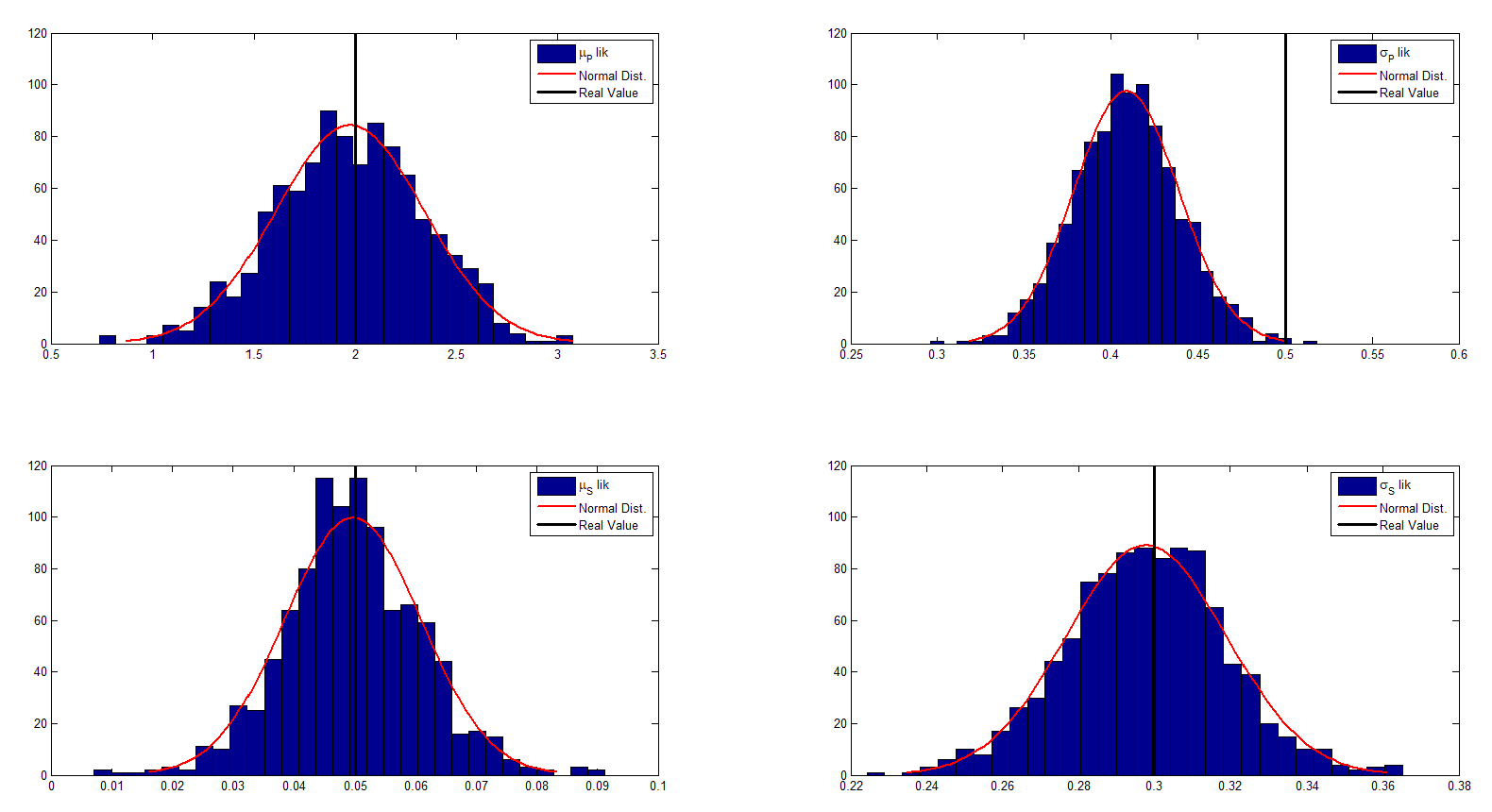} 
\caption{Histogram of parameter fit with simulated data of QML method.}\label{fig:hist_par}
\end{figure}

In order to disentangle the contribution of the Levy approximation \cite{levy1992pricing} to the estimation bias, we suggest to apply the method of moments to estimate $\mu_P$ and $\sigma_P$ considering the whole sample for $P$ generated at the finer observation step $\delta$ to compute the sample mean and sample variance of the sentiment realizations.
If we then we plug the estimated values in the likelihood \eqref{eq:likli} in order to estimate $\lbrace \mu_S, \sigma_S \rbrace$ these two estimates remain unchanged; in fact the likelihood may be maximized separately with respect to $\mu_P,\sigma_P$ and $\mu_S,\sigma_S$ since each of the two addend in the likelihood expression depends on just one of this pairs.
The results of this alternative estimation method are reported in Table \ref{tab:fitpar_momPi}.            

\begin{table}[htbp]
\caption{Parameter fit with simulated data of Moments method.}	\centering
\captionsetup{justification=centering}
\label{tab:fitpar_momPi}
\begin{tabular}{lrrrrrr}
	\toprule
    Variable    & Theor. value & Fitted value  &  Std. error &  t-value & P($>\lvert t \rvert$) & RMSE\\ 
	 \midrule
	$\mu_P$		& 2.0000 &	2.0104 & 0.3638	&  0.0286 &	0.9772 & 11.5034 \\
	$\sigma_P$	& 0.5000 &	0.4995 & 0.0135	& -0.0385 &	0.9693 &  0.4282 \\
	\bottomrule
	\end{tabular} 
\end{table} 

\begin{figure}[htbp]
\includegraphics[scale=0.38]{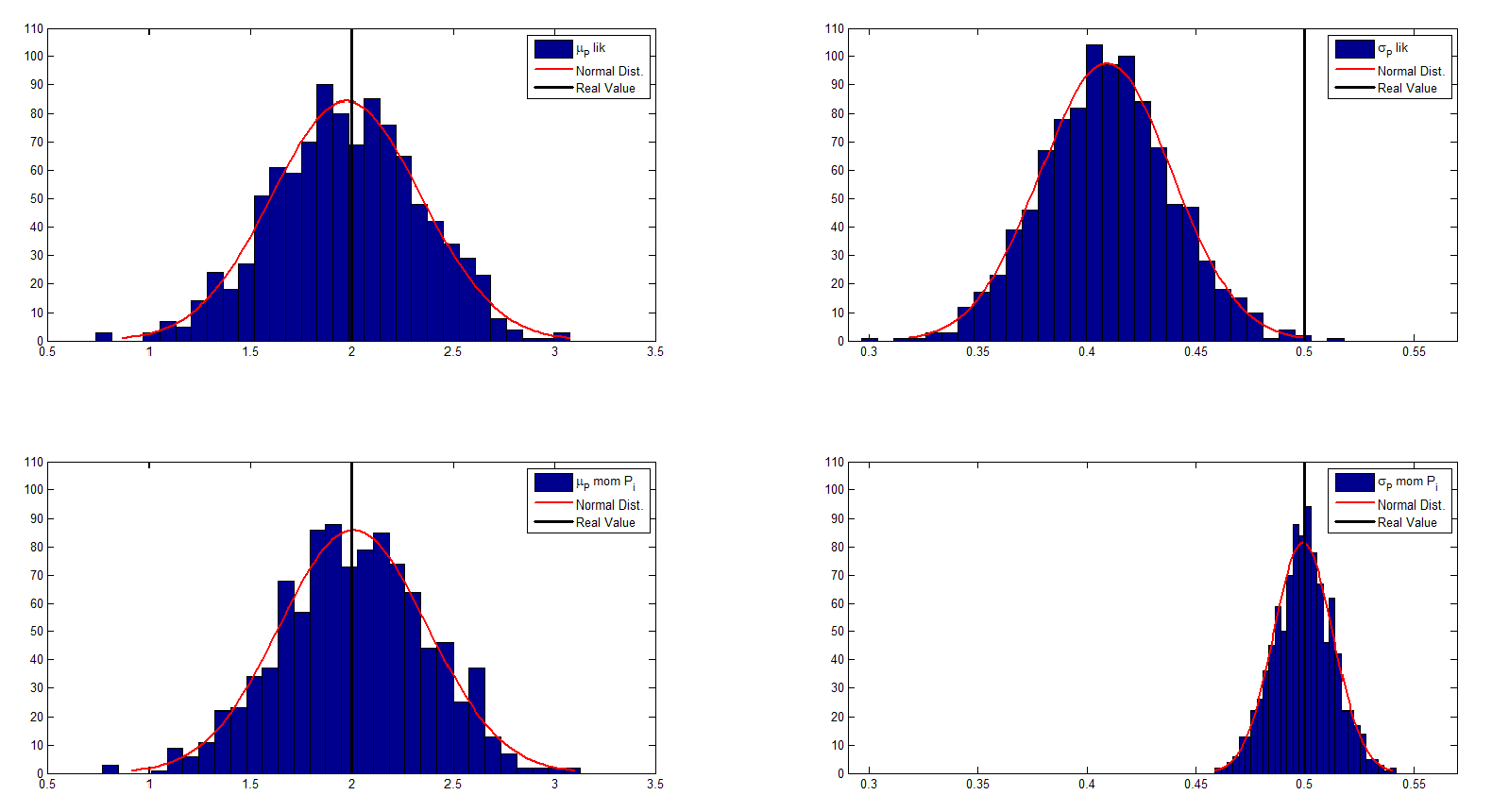} 
\caption{Histogram of parameter $\lbrace \mu_P, \sigma_P \rbrace$ fit with simulated data of QML method and Moments method at finer step $\delta$.}\label{fig:hist_par_QMLmomPi}
\end{figure}

To visualize the bias of $\lbrace \sigma_P \rbrace$ we plot in Figure \ref{fig:hist_par_QMLmomPi} the histogram of parameter $\lbrace \mu_P, \sigma_P \rbrace$ fit with simulated data using the two methods. As we can see using the two step procedure we obtain better estimates of $\lbrace \sigma_P \rbrace$ both in terms of expected value and standard deviation.
It is evident from Table \ref{tab:fitpar_momPi} and Figure \ref{fig:hist_par_QMLmomPi} the that the estimation of $\sigma_P$ is not biased in this case hence the estimation bias may essentially be attributed to the aggregation of the sentiment over time intervals and to the corresponding approximating distribution. Hence, whether the sentiment factor is observed at a finer step than the price, the above separate estimation is more reliable.

\subsection{Estimation of the delay parameter}
The delay parameter $\tau$ directly affects the definition of the discrete process $A_i$. Hence, in order to proceed with its estimation we need to observe the process $P$ at a finer observation step $\delta$ with respect to the log-returns. 
In what follows we set $\Delta=\delta r$ and we adopt a two step estimation procedure known as Profile Likelihood in order to estimate the delay. We briefly describe the Profile Likelihood approach to estimation and its application in our specific case; interested readers are referred to \citet{Davison:statmod, Pawitan:allLik} for details on the profile likelihood. 
The basic idea of this approach is to split the parameter vector which has to be estimated, say $\theta$, in two sub-vectors, one representing the parameter of interest and the other the so called nuisance parameter i.e. $\theta=(\gamma,\lambda)$;
to estimate $\gamma$ and $\lambda$ jointly we should maximize at once the likelihood i.e.
\begin{equation}
\max_{\substack{\gamma,\lambda}}\log{\L\left(\left(\gamma,\lambda\right)\right)}.
\end{equation}
When this is not feasible and provided the likelihood computed with respect to the nuisance parameter vector $\lambda$ is available and it is easy to maximize we can apply a two step procedure by maximizing, $\forall \gamma$ in its parametric space,
\begin{equation}
\label{eq:proflik}
\mathcal{L}_p(\gamma)=\max_{\substack{\lambda}}\mathcal{L}(\gamma,\lambda)=\mathcal{L}(\gamma,\widehat{\lambda}_\tau),\,\,\, 
\end{equation}
where $\widehat{\lambda}_\gamma$ is the maximum likelihood estimate of $\lambda$ for a fixed $\gamma$, then the best estimate for $\tau$ is
\begin{equation}
\widehat{\gamma}=\arg \max_{\substack{\gamma}}\log{\mathcal{L}_p(\gamma)}.
\end{equation}

%

Classical confidence intervals cannot be defined in this setting; indeed, it is possible to obtain a confidence region for $\tau$ using the likelihood ratio statistics (see \citet{Davison:statmod}),  defined as
\begin{equation}
W_p(\gamma_0)=2\left\lbrace \mathcal{L}(\widehat{\gamma},\widehat{\lambda}) - \mathcal{L}(\gamma_0,\widehat{\lambda}_{\gamma_0}) \right\rbrace,
\end{equation}
where
\begin{equation}
W_p(\gamma_0) \xrightarrow{ \;D\; } \chi_p^2
\end{equation}
and $\gamma_0$ is an assigned value for $\gamma$.
These results imply that the confidence region for $\gamma$ is the set
\begin{equation}
\label{eq:confreg}
\left\lbrace \gamma : \mathcal{L}_p(\gamma)\geq \mathcal{L}_p(\widehat{\gamma})-\frac{1}{2} c_p(1-2\alpha) \right\rbrace ,
\end{equation}
with $c_p(\alpha)$ is the $\alpha$ quantile of the $\chi_p^2$ distribution.

In our exercise we split $\theta:=(\mu_P,\mu_S,\sigma_P,\sigma_S,\tau)$ in $\theta=(\tau,\lambda)$ where $\tau$ is the parameter on which we are focusing  and $\lambda=(\mu_P,\mu_S,\sigma_P,\sigma_S)$ is the nuisance parameter vector.  The Profile Likelihood approach is feasible in our case since a closed approximating expression for the likelihood with respect to the nuisance parameter is indeed available.
%
%
%
The parametric space for $\gamma:=\tau$ is the interval $[0,L]$ in this case but, for practical purposes, $\tau$ is chosen on a grid i.e.  $\tau\in\lbrace \tau_0,\tau_1,\tau_2,\ldots,\tau_k\rbrace$; the maximization of the likelihood  $\log{\L_\mathbf{R,A}(\mathbf{r},\mathbf{a})}$ is then performed with respect to $\lambda$ for each value $\tau_j$ in the grid, obtaining $\mathcal{L}_p(\tau_j)$ for $j=0,1,\dots,k$. An estimate for $\tau$ is then obtained as $\widehat{\tau}=\arg\max_j \mathcal{L}_p(\tau_j)$.
Finally we get  $\widehat{\theta}=\left(\widehat{\tau},\widehat{\lambda}_{\widehat{\tau}} \right)$.
Of course the estimation error decreases with the mesh of the grid so that it sufficiently spans the parametric set for $\tau$.

\section{Risk neutral evaluation of European-type contingent claims} \label{sec:option-pricing}

Let $H=\varphi(S_T)$ be an $\widetilde \F_T$-measurable random variable representing the payoff a European-type contingent claim with date of maturity $T$, which can be traded on the underlying market. Here $\varphi : \R \to \R$ is a
a Borel-measurable function  such that $H$ is integrable under $\widehat \P$. The function $\varphi$ is usually referred to as the {\em contract function}.
The following result provides a risk-neutral pricing formula under the minimal martingale measure $\widehat \P$ for any $\widehat \P$-integrable European contingent claim. Since the martingale measure is fixed, the risk-neutral price agrees with
the arbitrage free price for
those options which can be replicated by investing on the underlying market. 
Recall that
$X_{t,T}^\tau=X_T^\tau-X_t^\tau$, for each $t \in [0,T)$, refers to the variation of the process $X^\tau$ defined in \eqref{eq:int_info}, over the interval $[t,T]$. Then, 
denote by $\espbar{\cdot\Big{|}\widetilde \F_t}$ the conditional expectation with respect to $\widetilde \F_t$ under the probability measure $\widehat \P$ and so on.
\begin{theorem}\label{th:cont_claim}
Let  $H=\varphi(S_T)$ be the payoff a European-type contingent claim with date of maturity $T$. Then, the risk-neutral price $\Phi_t(H)$ at time $t$ of $H$ is given by
\begin{equation}\label{eq:gen_option}
\Phi_t(H) = \espp{\psi(t,S_t,X_{t,T}^\tau)\Bigg{|} S_t}, \quad t \in [0,T),
\end{equation}
where 
$\psi: [0,T) \times \R_+ \times \R_+ \longrightarrow \R$  is a Borel-measurable function 
such that
\begin{equation}\label{def:psi}
\psi(t,S_t,X_{t,T}^\tau)=B_t\espp{\frac{1}{B_T}G\left(t,S_t,X_{t,T}^\tau,Y_{t,T}\right)\Bigg{|}\F_t^W \vee \F_{T-\tau}^P},
\end{equation}
for a suitable function $G$ depending on the contract such that $G\left(t,S_t,X_{t,T}^\tau,Y_{t,T}\right)$ is $\widehat \P$-integrable.
\end{theorem}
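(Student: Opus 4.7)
The plan is to begin from the standard risk-neutral valuation identity under the minimal martingale measure, namely
\begin{equation*}
\Phi_t(H) = B_t\, \espp{\varphi(S_T)/B_T \,\big|\, \widetilde{\F}_t},
\end{equation*}
and to reduce the right-hand side to the form claimed by a single application of the tower property. First, using Theorem \ref{th:sol} applied to the $(S,P)$-system under $\widehat{\P}$ given in \eqref{eq:Bivdyn-Q}, I would write
\begin{equation*}
S_T = S_t \exp\!\left\{\int_t^T r(s)\,\ud s - \frac{\sigma_S^2}{2}\, X_{t,T}^\tau + \sigma_S\, Y_{t,T}\right\},
\end{equation*}
with $Y_{t,T} := \int_t^T \sqrt{P_{u-\tau}}\,\ud \widehat{W}_u$. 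Substituting into the discounted payoff yields $\varphi(S_T)/B_T = G(t, S_t, X_{t,T}^\tau, Y_{t,T})/B_T$ for an appropriate Borel function $G$, whose $\widehat{\P}$-integrability is inherited from that of $H$.

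The key step is then to insert the enlarged $\sigma$-algebra $\F_t^W \vee \F_{T-\tau}^P$, which contains $\widetilde{\F}_t = \F_t^W \vee \F_{t-\tau}^P$ because $T > t$. By the tower property,
\begin{equation*}
\Phi_t(H) = \espp{B_t\, \espp{G(t,S_t, X_{t,T}^\tau, Y_{t,T})/B_T \,\big|\, \F_t^W \vee \F_{T-\tau}^P} \,\big|\, \widetilde{\F}_t}.
\end{equation*}
The inner conditional expectation matches \eqref{def:psi} verbatim: $S_t$ and $X_{t,T}^\tau$ are both measurable with respect to the conditioning $\sigma$-algebra, so they may be pulled out as parameters, and the only remaining randomness is that of $Y_{t,T}$, whose conditional law under $\widehat{\P}$ is Gaussian with mean $0$ and variance $X_{t,T}^\tau$ (this uses the $\widehat{\P}$-independence of $\widehat{W}$ and $\widehat{Z}=Z$ together with the $\F_{T-\tau}^P$-measurability of $\sqrt{P_{\cdot-\tau}}$ on $[t,T]$). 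Integrating against this Gaussian law produces a Borel function of $(t,S_t,X_{t,T}^\tau)$, which is by definition $\psi(t,S_t,X_{t,T}^\tau)$.

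The main obstacle is the last reduction: passing from the outer conditioning on $\widetilde{\F}_t$ to conditioning on $\sigma(S_t)$ only. Since $S_t$ is $\widetilde{\F}_t$-measurable, a further application of the tower property immediately gives the inequality of sigma-fields, but one still has to justify that the $\widetilde{\F}_t$-conditional law of $X_{t,T}^\tau$ (which, by the Markov property of $P$ under $\widehat{\P}$, is driven by $P_{t-\tau}$) is captured, for pricing purposes, by $S_t$ alone. This is the delicate point of the argument and the place where a Markov-type or conditional-independence property of the bivariate system under $\widehat{\P}$ must be invoked; equivalently, the statement \eqref{eq:gen_option} should be read as expressing the price as a function of the observable state $S_t$, with the unobservable sentiment trajectory integrated out against its conditional law.
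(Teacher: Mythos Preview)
Your argument tracks the paper's proof step for step up to and including the identification of $\psi$: risk-neutral valuation under $\widehat\P$, the representation $S_T = S_t\exp\bigl\{\int_t^T r(s)\,\ud s - \tfrac{\sigma_S^2}{2} X_{t,T}^\tau + \sigma_S Y_{t,T}\bigr\}$, the tower property with the enlarged $\sigma$-field $\F_t^W \vee \F_{T-\tau}^P \supset \widetilde\F_t$, and the conditional normality of $Y_{t,T}$ given that $\sigma$-field (using that $\widehat Z = Z$ remains independent of $\widehat W$ under $\widehat\P$, so that in law $Y_{t,T}=\sqrt{X_{t,T}^\tau}\,\epsilon$ with $\epsilon\sim\mathcal N(0,1)$).

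For the final reduction you correctly flag as delicate, the paper does \emph{not} argue via a Markov property of the bivariate system. It simply asserts that $X_{t,T}^\tau$ is \emph{independent} of $\widetilde\F_t$ under $\widehat\P$ and invokes the standard freezing lemma (Pascucci, Lemma~A.108): since $S_t$ is $\widetilde\F_t$-measurable and $X_{t,T}^\tau$ is claimed independent of $\widetilde\F_t$, one obtains $\espp{\psi(t,S_t,X_{t,T}^\tau)\mid\widetilde\F_t}=g(S_t)$ with $g(s)=\espp{\psi(t,s,X_{t,T}^\tau)}$, which is precisely conditioning on $S_t$ alone. Your hesitation is in fact warranted: for $\tau\le t$ one has $X_{t,T}^\tau=\int_{t-\tau}^{T-\tau}P_u\,\ud u$, and writing $P_u=P_{t-\tau}\exp\bigl\{(\mu_P-\tfrac{\sigma_P^2}{2})(u-t+\tau)+\sigma_P(Z_u-Z_{t-\tau})\bigr\}$ shows that $X_{t,T}^\tau$ carries the $\F_{t-\tau}^P$-measurable (hence $\widetilde\F_t$-measurable) factor $P_{t-\tau}$, so the independence claim does not hold as stated. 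In the paper's applications the formula is used at $t=0$, where $P_{-\tau}=\phi(-\tau)$ is deterministic and the issue evaporates; for general $t$ the price is more accurately a Borel function of the Markov state $(S_t,P_{t-\tau})$, and \eqref{eq:gen_option} should be read with that understanding.
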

\begin{proof}
For the sake of simplicity suppose that
$\tau<T$ and set $Y_{t,T}:= \int_t^T\sqrt{P_{u-\tau}}\ud \widehat W_u$, for each $t \in [0,T)$. Then, the risk-neutral price $\Phi_t(H)$ at time $t$ of a European-type contingent claim with payoff $H=\varphi(S_T)$ is given by
\begin{align}
\Phi_t(H) & = B_t \espp{\frac{\varphi(S_T)}{B_T}\bigg{|}\widetilde \F_t}\\
& = B_t \espp{\espp{\frac{\varphi\left(S_te^{\int_t^T r(u)\ud u-\frac{\sigma_S^2}{2}X_{t,T}^\tau+ \sigma_S Y_{t,T}}\right)}{B_T}\left\vert\vphantom{\frac{\varphi\left(S_te^{\int_t^T r(u)\ud u-\frac{\sigma_S^2}{2}X_{t,T}^\tau+ \sigma_S Y_{t,T}}\right)}{B_T}}\right. \F_t^W \vee \F_{T-\tau}^P}\left\vert\vphantom{\frac{\varphi\left(S_te^{\int_t^T r(u)\ud u-\frac{\sigma_S^2}{2}X_{t,T}^\tau+ \sigma_S Y_{t,T}}\right)}{B_T}}\right.\widetilde \F_t}, \label{eq:phi_S}
\end{align}
where
$\espp{\cdot\Big{|}\widetilde \F_t}$ denotes the conditional expectation with respect to $\widetilde \F_t$ under the minimal martingale measure $\widehat \P$. More generally, \eqref{eq:phi_S} can be written as
\begin{equation}\label{eq:G}
\Phi_t(H) = B_t \espp{\espp{\frac{G(t,S_t,X_{t,T}^\tau, Y_{t,T})}{B_T}\Bigg{|}\F_t^W \vee \F_{T-\tau}^P}\Bigg{|}\widetilde \F_t},
\end{equation}
for a suitable function $G$ depending on the contract function $\varphi$.
Since the $(\bF,\P)$-Brownian motion $Z$ driving the factor $P$ is not affected by the change of measure from $\P$ to $\widehat \P$ by the definition of minimal martingale measure, 
we have that $Z$ is also an $(\bF,\widehat \P)$-Brownian motion independent of $\widehat W$, see \eqref{def:hat_Z}.
Hence, we can apply
the same arguments
used in point (ii) of the proof of Theorem \ref{th:sol}, to get that, for each $t \in [0,T)$, the random variable $Y_{t,T}$ 
conditioned on $\F_{T-\tau}^P$ is Normally distributed with mean $0$ and variance $X_{t,T}^\tau$. Then, we can write (in law) that $Y_{t,T}= \sqrt{X_{t,T}^\tau} \epsilon$, where $\epsilon$ is a standard Normal random variable and this allows to find a function $\psi$ such that \eqref{def:psi} holds, which means that 
the conditional expectation with respect to $\F_t^W \vee \F_{T-\tau}^P$ in \eqref{eq:G}
only depends on $S_t$ and $X_{t,T}^\tau$, for every $t \in [0,T)$. 
Consequently, the risk-neutral price $\Phi_t(H)$ can be written as
\begin{align}
\Phi_t(H) & = \espp{\psi(t,S_t,X_{t,T}^\tau)\bigg{|}\widetilde \F_t}= \espp{\psi(t,S_t,X_{t,T}^\tau)\Bigg{|} S_t},
\label{eq:pricing}
\end{align}
where the last equality holds since $S$ is $\widetilde \bF$-adapted and $X_{t,T}^\tau$ is independent of $\widetilde \F_t$, for each $t \in [0,T)$, 
see e.g. \citet[Lemma A.108]{pascucci2011pde}. More precisely, we have
\begin{equation}
\espp{\psi(t,S_t,X_{t,T}^\tau) \bigg{|}\widetilde \F_t} =\espp{\psi(t,S_t,X_{t,T}^\tau) \bigg{|} S_t}=g(S_t),
\end{equation}
where
\begin{equation}
g(s)=\espp{\psi(t,s,X_{t,T}^\tau) \bigg{|} S_t=s},\quad s \in \R_+.
\end{equation}
\end{proof}

\begin{remark}\label{rem:sigma}
It is worth to remark that $\psi(t,S_t,x)$, with $x \in \R_+$, represents the risk-neutral price at time $t \in [0,T)$ of the contract $H=\varphi(S_T)$ in a Black \& Scholes framework, where the constant volatility parameter $\sigma^{BS}$ is defined by 
$$
\sigma^{BS}:=\sigma_S\sqrt{\frac{x}{T-t}}.
$$
This is proved explicitly in Corollary \ref{th:call2} below for the special case of a \textit{plain vanilla} European Call option.
\end{remark}

\begin{remark}
A pricing formula analogous to \eqref{eq:pricing} is conjectured in \citet{hull1987pricing} for a special example of the model suggested here ($\tau=0$ and $\sigma_S=1$). As already noticed the authors start from the very beginning under a risk neutral framework. Theorem \ref{th:cont_claim} extends their results to the more general case and give a rigorous proof.
\end{remark}

\subsection{A Black \& Scholes-type option pricing formula}

Let us consider a European Call option with strike price $K$ and maturity $T$ and define the function $C^{BS}$ as follows
\begin{equation} \label{def:pricing_function}
C^{BS}(t,s,x):=s\mathcal N(d_1(t,s,x)) - Ke^{-\int_0^t r(u) \ud u}\mathcal N(d_2(t,s,x)),
\end{equation}
where
\begin{equation} \label{def:d1}
d_1(t,s,x)=\frac{\log\left(\frac{s}{K}\right) + \int_0^t r(u)\ud u + \frac{\sigma_S^2}{2}x}{\sigma_S \sqrt{x}}
\end{equation}
and $d_2(t,s,x)=d_1(t,s,x)-\sigma_S \sqrt{x}$, or more explicitly
\begin{equation} \label{def:d2}
d_2(t,s,x)=\frac{\log\left(\frac{s}{K}\right) + \int_0^t r(u)\ud u - \frac{\sigma_S^2}{2}x}{\sigma_S \sqrt{x}}.
\end{equation}
Here, $\mathcal N$ stands for the standard Gaussian cumulative distribution function
$$
\mathcal N(y)=\frac{1}{\sqrt{2\pi}} \int_{-\infty}^{y} e^{-\frac{z^2}{2}}\ud z, \quad \forall\ y \in \R.
$$
\begin{corollary} \label{th:call2}
The risk-neutral price $C_t$ at time $t$ of a European Call option written on the BitCoin with price $S$ expiring in $T$ and with strike price $K$ is given by the formula
\begin{equation}\label{eq:call}
C_t=\espp{C^{BS}(t,S_t,X_{t,T}^\tau)\bigg{|} S_t}, \quad t \in [0,T), 
\end{equation}
where 
the function $C^{BS}:[0,T) \times \R_+ \times \R_+ \longrightarrow \R$ is given by 
\eqref{def:pricing_function}
and the functions $d_1$, $d_2$ are respectively given by \eqref{def:d1}-\eqref{def:d2}. 
\end{corollary}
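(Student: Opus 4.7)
The plan is to apply Theorem \ref{th:cont_claim} directly to the European Call payoff $\varphi(s)=(s-K)^+$ and then evaluate the inner conditional expectation as a classical log-normal (i.e. Black--Scholes) integral. Since the entire framework for pricing has already been developed, the work reduces to identifying the explicit form of the function $\psi$ in \eqref{def:psi} when $\varphi$ is the Call contract.

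First, I would substitute $\varphi(s)=(s-K)^+$ into \eqref{def:psi}. Using the explicit representation of $S_T$ under $\widehat \P$ derived in the proof of Theorem \ref{th:cont_claim}, namely
\[
S_T = S_t \exp\!\left(\int_t^T r(u)\,\ud u - \frac{\sigma_S^2}{2} X_{t,T}^\tau + \sigma_S Y_{t,T}\right),
\]
the function $G$ in \eqref{eq:G} becomes $G\left(t,S_t,X_{t,T}^\tau,Y_{t,T}\right) = (S_T-K)^+$, so that
\[
\psi(t, S_t, X_{t,T}^\tau) = B_t\, \espp{ e^{-\int_0^T r(u)\,\ud u} (S_T - K)^+ \Big{|} \F_t^W \vee \F_{T-\tau}^P}.
\]

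Next, I would invoke the crucial fact already established in the proof of Theorem \ref{th:cont_claim}: conditionally on $\F_t^W \vee \F_{T-\tau}^P$, $Y_{t,T}$ is Normally distributed with mean $0$ and variance $X_{t,T}^\tau$, and is independent of $S_t$ and $X_{t,T}^\tau$. Consequently, conditionally on this $\sigma$-algebra, $\log S_T$ is Gaussian, which is exactly the setting of a Black--Scholes model on $[t,T]$ with constant volatility $\sigma^{BS}=\sigma_S\sqrt{X_{t,T}^\tau/(T-t)}$, as anticipated in Remark \ref{rem:sigma}. The remainder is then the standard Black--Scholes calculation: I would split $(S_T - K)^+ = S_T \mathbf{1}_{\{S_T > K\}} - K \mathbf{1}_{\{S_T > K\}}$, translate the event $\{S_T > K\}$ into a half-line event for $Y_{t,T}$, and evaluate the two Gaussian integrals (completing the square in the $S_T$ term and using the log-normal mean formula). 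This yields $\psi(t, S_t, X_{t,T}^\tau) = C^{BS}(t, S_t, X_{t,T}^\tau)$ with $C^{BS}$ as in \eqref{def:pricing_function}, and \eqref{eq:call} follows at once from \eqref{eq:gen_option}.

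There is no real obstacle: given Theorem \ref{th:cont_claim}, the derivation is entirely routine. The only care required is in bookkeeping the discount factors $B_t/B_T = \exp(-\int_t^T r(u)\,\ud u)$ and matching the resulting logarithmic factors against the conventions adopted in \eqref{def:d1}--\eqref{def:d2} for the arguments $d_1$ and $d_2$. Once that matching is made, the proof is complete.
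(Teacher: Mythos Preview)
Your proposal is correct and in spirit follows the same conditional--Gaussian reduction, but the route you take diverges from the paper's in one notable respect. You invoke Theorem \ref{th:cont_claim} as a black box and then compute the inner conditional expectation $\psi$ by the elementary device of splitting $(S_T-K)^+$, rewriting $\{S_T>K\}$ as a half-line event for the Gaussian $Y_{t,T}$, and completing the square to handle the $S_T\mathbf 1_{\{S_T>K\}}$ term. The paper instead works from the outer conditional expectation $\espp{\cdot|\widetilde\F_t}$ directly (without appealing to Theorem \ref{th:cont_claim}), decomposes the discounted payoff into $J_1=\espp{\widetilde S_T\mathbf 1_{\{S_T>K\}}|\widetilde\F_t}$ and $J_2=\espp{\mathbf 1_{\{S_T>K\}}|\widetilde\F_t}$, and evaluates $J_1$ via an explicit change of num\'eraire: it introduces the auxiliary measure $\bar\P$ with density $\widetilde S_T/\widetilde S_0$, applies Girsanov to obtain a new Brownian motion $\bar W$, and reduces $J_1$ to a probability under $\bar\P$; a second Bayes-formula step brings the answer back under $\widehat\P$. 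Your approach is shorter and more self-contained once Theorem \ref{th:cont_claim} is in hand; the paper's approach, while longer, makes the num\'eraire structure explicit and yields as a by-product the observation (used just after the corollary) that the law of $X_{t,T}^\tau$ is unchanged under both $\widehat\P$ and $\bar\P$.
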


\begin{proof}
As in the proof of Theorem \ref{th:cont_claim}, let us assume that %
$\tau<T$. Under the 
minimal martingale measure $\widehat \P$,
the risk-neutral price $C_t$ at time $t \in [0,T)$ of a European Call option written on the BitCoin with price $S$ expiring in $T$ and with strike price $K$, is given by
\begin{align*}
C_t & = B_t \espp{\frac{\max \left( S_T-K,0\right)}{B_T}\bigg{|}\widetilde \F_t}\\
& = B_t\espp{\widetilde S_T\I_{\{S_T>K\}}\Big{|}\widetilde \F_t}-K e^{-\int_t^T r(u) \ud u}\espp{\I_{\{S_T>K\}}\Big{|}\widetilde \F_t}\\
& = B_t J_1 - Ke^{-\int_t^T r(u) \ud u}J_2,
\end{align*}
where we have set $J_1:=\espp{\widetilde S_T\I_{\{S_T>K\}}\Big{|}\widetilde \F_t}$ and $J_2:=\espp{\I_{\{S_T>K\}}\Big{|}\widetilde \F_t}$.
Recall that $Y_{t,T}= \int_t^T\sqrt{P_{u-\tau}}\ud \widehat W_u$, for every $t \in [0,T)$. Then, the term 
$J_2$ can be written as
\begin{align}
J_2 & = \espp{\espp{\I_{\{S_T>K\}}|\F_t^W \vee \F_{T-\tau}^P}\Big{|}\widetilde \F_t}\nonumber \\
& = 
\espp{\widehat \P\left(S_te^{\int_t^T r(u)\ud u-\frac{\sigma_S^2}{2}X_{t,T}^\tau+\sigma_S Y_{t,T}}>K \bigg{|}\F_t^W \vee \F_{T-\tau}^P \right)\bigg{|}\widetilde \F_t}\\
& = 
\espp{\widehat \P\left(-\frac{Y_{t,T}}{
\sqrt{X_{t,T}^\tau}} < \frac{\log\left(\frac{S_t}{K}\right)+ \int_t^T r(u)\ud u - \frac{\sigma_S^2}{2}X_{t,T}^\tau}{
\sigma_S \sqrt{X_{t,T}^\tau}} \bigg{|}\F_t^W \vee \F_{T-\tau}^P\right) \bigg{|}\widetilde \F_t}\\
& = 
\espp{\mathcal N\left( d_2(t,S_t,X_{t,T}^\tau) \right) \bigg{|}\widetilde \F_t}, \end{align}
as for each $t \in [0,T)$,
the random variable $\ds -\frac{Y_{t,T}}{\sqrt{X_{t,T}^\tau}}$ has a standard Gaussian law $\mathcal N(0,1)$ given $\F_t^W \vee \F_{T-\tau}^P$ under the minimal martingale measure $\widehat \P$.
Concerning $J_1$, 
consider the auxiliary probability measure $\bar \P$ on $(\Omega,\F_T)$ 
defined 
as follows:
\begin{equation} \label{def:barP}
\frac{\ud \bar \P}{\ud \widehat \P} := 
e^{-\frac{\sigma_S^2}{2} \int_0^T P_{u-\tau}\ud u + \sigma_S \int_0^T \sqrt{P_{u-\tau}}\ud \widehat W_u}, \quad \widehat \P-\mbox{a.s.}.
\end{equation}
By Girsanov's Theorem, we get that the process $\bar W=\{\bar W_t,\ t \in [0,T]\}$, given by
\begin{equation} \label{def:barW}
\bar W_t := \widehat W_t - \sigma_S\int_0^t\sqrt{P_{u-\tau}}\ud u, \quad t \in [0,T],
\end{equation}
follows a standard $(\bF,\bar \P)$-Brownian motion. In addition, using \eqref{def:tilde_S}, we obtain
\begin{equation} \label{def:tildeST}
\widetilde S_T = \widetilde S_t e^{\sigma_S\int_t^T\sqrt{P_{u-\tau}}\ud \bar W_u + \frac{\sigma_S^2}{2}\int_t^T P_{u-\tau} \ud u},
\end{equation}
for every $t \in [0,T]$. Since $S$ is $\widetilde \bF$-adapted, by \eqref{def:tilde_S} and the Bayes formula on the change of probability measure for conditional expectation, for every $t \in [0,T)$ we get
\begin{align}
J_1 & = \espp{\widetilde S_T \I_{\{S_T>K\}}\Big{|}\widetilde \F_t}\nonumber\\
& = \widetilde S_t \frac{\espp{e^{-\frac{\sigma_S^2}{2} X_{T}^\tau + \sigma_S Y_{0,T}}\I_{\{S_T>K\}}\bigg{|}\widetilde \F_t}}{e^{-\frac{\sigma_S^2}{2} X_{t}^\tau + \sigma_S Y_{0,t}}}\nonumber\\
& = \widetilde S_t \espbar{\I_{\left\{\widetilde S_T > K B_T^{-1}\right\}}\bigg{|}\widetilde \F_t}\nonumber\\
& = \widetilde S_t \espbar{ \espbar{\I_{\left\{
 \sigma_S \bar Y_{t,T} > \log\left(\frac{K}{S_t}\right)- \int_t^T r(u)\ud u - \frac{\sigma_S^2}{2}X_{t,T}^\tau
\right\}}\bigg{|}\F_t^W \vee \F_{T-\tau}^P} \bigg{|}\widetilde \F_t}\nonumber\\
& = \widetilde S_t \espbar{\bar \P \left(
-\frac{\bar Y_{t,T}}{\sqrt{X_{t,T}^\tau}} < \frac{\log\left(\frac{S_t}{K}\right) + \int_t^T r(u)\ud u + \frac{\sigma_S^2}{2}X_{t,T}^\tau}{\sigma_S \sqrt{X_{t,T}^\tau}}
\bigg{|}\F_t^W \vee \F_{T-\tau}^P\right) \bigg{|}\widetilde \F_t}\nonumber\\
& = \widetilde S_t\espbar{\mathcal N\left(d_1(t,S_t,X_{t,T}^\tau) \right) \bigg{|}\widetilde \F_t}, \label{term2}
\end{align}
with 
$$
d_1(t,S_t,X_{t,T}^\tau) = d_2(t,S_t,X_{t,T}^\tau) + \sigma_S\sqrt{X_{t,T}^\tau}.
$$
In the above computations, analogously to before, we have set $\bar Y_{t,T}:= \int_t^T\sqrt{P_{u-\tau}}\ud \bar W_u$, for each $t \in [0,T)$. Consequently, we have  that $\bar Y_{t,T}$ 
conditional on $\F_{T-\tau}^P$, is a Normally distributed random variable with mean $0$ and variance $X_{t,T}^\tau$, for each $t \in [0,T)$, since $Z$ is not affected by the change of measure from $\widehat \P$ to $\bar \P$.
Indeed,
by the change of numéraire theorem, we have that the probability measure $\bar \P$ turns out to be the minimal martingale measure corresponding to the choice of the BitCoin price process as benchmark.
Further, by applying again
the Bayes formula on the change of probability measure for conditional expectation, we get
\begin{align}
J_1 & = \widetilde S_t\espbar{\mathcal N\left(d_1(t,S_t,X_{t,T}^\tau) \right)\bigg{|}\widetilde \F_t}\nonumber\\
& = \widetilde S_t \frac{\espp{\N\left(d_1(t,S_t,X_{t,T}^\tau) \right)e^{-\frac{\sigma_S^2}{2} \int_0^T P_{u-\tau}\ud u + \sigma_S \int_0^T \sqrt{P_{u-\tau}}\ud \widehat W_u}\bigg{|}\widetilde \F_t}}{e^{-\frac{\sigma_S^2}{2} \int_0^t P_{u-\tau}\ud u + \sigma_S \int_0^t \sqrt{P_{u-\tau}}\ud \widehat W_u}}\nonumber\\
& = \widetilde S_t \espp{\N\left(d_1(t,S_t,X_{t,T}^\tau) \right)e^{-\frac{\sigma_S^2}{2} X_{t,T}^\tau}\espp{e^{\sigma_S Y_{t,T}}
\bigg{|}\F_t^W \vee \F_{T-\tau}^P}\bigg{|}\widetilde \F_t}\nonumber\\
& = \widetilde S_t\espp{\mathcal N\left(d_1(t,S_t,X_{t,T}^\tau) \right)\bigg{|}\widetilde \F_t}, \label{term1}
\end{align}
since the conditional Gaussian distribution of $Y_{t,T}$ gives
$$
\espp{e^{\sigma_S Y_{t,T}}\bigg{|}\F_t^W \vee \F_{T-\tau}^P}=
e^{\frac{\sigma_S^2}{2} X_{t,T}^\tau}.
$$
Finally, gathering the two terms \eqref{term1} and \eqref{term2}, for every $t \in [0,T)$
we obtain
\begin{align}
C_t 
&= S_t\espp{\mathcal N\left(d_1(t,S_t,X_{t,T}^\tau) \right)\bigg{|}\widetilde \F_t}-K e^{-\int_t^T r(u) \ud u}\espp{\mathcal N\left( d_2(t,S_t,X_{t,T}^\tau) \right) \bigg{|}\widetilde \F_t} \nonumber \\
&=\espp{C^{BS}(t,S_t,X_{t,T}^\tau)\bigg{|}\widetilde \F_t}\nonumber \\
&= \espp{C^{BS}(t,S_t,X_{t,T}^\tau)\bigg{|} S_t}, 
\end{align}
where the last equality follows again from \citet[Lemma A.108]{pascucci2011pde}, since
for each $t \in [0,T)$, $X_{t,T}^\tau$ is independent of $\widetilde \F_t$  and $S_t$ is  $\widetilde \F_t$-measurable.

\end{proof}

It is worth noticing that the option pricing formula \eqref{eq:call} only depends on the distribution of $X_{t,T}^\tau$  which is the same both under measure $\widehat \P$ and $\bar \P$. 
As observed in Remark \ref{rem:sigma}, formula \eqref{eq:call} evaluated in $S_t$ corresponds to the Black \& Scholes price at time $t \in[0,T)$ of a European Call option written on $S$, with strike price $K$ and maturity $T$, in a market where the volatility parameter is given by $\sigma_S \sqrt{\frac{x}{T-t}}$.
Then, for every $t \in [0,T)$ it  may be written as:
\begin{equation} \label{eq:integr}
C_t=\int_0^{+\infty} C^{BS}(t,S_t,x) f_{X_{t,T}^\tau}(x) \ud x,  
\end{equation}
where 
$f_{X_{t,T}^\tau}(x)$ denotes the density function of $X_{t,T}^\tau$, for each $t \in [0,T)$ (if it exists). 
The price at time $t$ for a \textit{plain vanilla} European option may also be written as a Black \& Scholes style price:
\begin{equation} 
C_t=S_t Q_1-K e^{-\int_t^T r(u) \ud s} Q_2, 
\end{equation}
where 
\begin{align} 
Q_1:=\espp{\mathcal N\left(d_1(t,S_t,X_{t,T}^\tau) \right)\bigg{|}S_t}
&= \int_0^{+\infty} \mathcal N\left( d_1(t,S_t,x) \right) f_{X_{t,T}^\tau}(x) \ud x,\\
\end{align}

and \begin{align} 
Q_2:=\espp{\mathcal N\left(d_2(t,S_t,X_{t,T}^\tau) \right)\bigg{|}S_t}
&= \int_0^{+\infty} \mathcal N\left( d_2(t,S_t,x) \right) f_{X_{t,T}^\tau}(x) \ud x.\\
\end{align}
To compute numerically derivative prices by the above formulas, we should compute the  distribution of $X_{t,T}^\tau$, which is not an easy task.

Similar formulas can be computed for other European style derivatives as for binary options which, indeed, are quoted in BitCoin markets. For the case of a Cash or Nothing Call, which is essentially a bet of $A$ on the exercise event, the risk-neutral pricing formula is given by
\begin{align}
C_t^{Bin} & =Ae^{-\int_t^T r(u) \ud s}\espp{\mathcal N\left( d_2(t,S_t,X_{t,T}^\tau) \right)\bigg{|} S_t}\\
&= Ae^{-\int_t^T r(u) \ud s} \int_0^{+\infty} \mathcal N\left( d_2(t,S_t,x) \right) f_{X_{t,T}^\tau}(x) \ud x,\quad t \in [0,T).  \label{binarysimple}
\end{align}
%
%

By applying the Levy approximation, see \citet{levy1992pricing}, to $X_0^T$, the Call option pricing formula becomes
\begin{equation}
C_0 = \int_0^{+\infty} C^{BS}(0,S_0,x) \mathcal{LN} pdf_{\alpha(T-\tau),\nu^2(T-\tau)} \left(x\right) \ud x,  \label{callexample}
\end{equation}
which can be computed numerically, once parameters $\alpha(T-\tau),\nu(T-S)$ are obtained. Similarly, Binary Options with terminal value $A$ when in the money, are priced by computing numerically the following formula:
\begin{equation}
C_0^{Bin}=A \int_0^{+\infty} \mathcal N\left( d_2(T-t,S_0,x) \right) \mathcal{LN} pdf_{\alpha(T-\tau),\nu^2(T-\tau)} \left(x\right)\ud x.  
\label{digitalexample}
\end{equation}

\subsection{Numerical exercise}
In this subsection we compute  European plain vanilla and binary option prices assuming that model parameter are known and considering several strike prices and expiration dates. Besides, we also let the initial sentiment and the delay values change in order to understand their contribution to the option price formation. Assume that model parameters are $\mu_P=0.03,\sigma_P=0.35, \sigma_S=0.04$, the riskless interest rate is $r=0.01$ and that the BitCoin price at time $t=0$ is $S_0=450$ (this is the price by October 2016).

In Table \ref{tab:prices},  Call option prices are reported for $T=3$ months, $\tau$= 5 days. Rows correspond to different values of $P_0$ while columns to different values for the strike price. As expected, Call option prices are increasing with respect to initial sentiment for the BitCoin and decreasing with respect to strike prices.
\begin{table}[htp]
\caption{Call option prices against different strikes $K$ and for different values of $P_0$: $S_0=450,r=0.01,\mu_P=0.03,\sigma_P=0.35, \sigma_S=0.04$, $T=3$ months, $\tau$ = $1$ week (5 days).}
 \label{tab:prices}
 \centering
\begin{tabular}{||c|c|c|c|c|c||}
\hline 
K & 400 & 425 & 450 & 475 & 500 \\ 
\hline 
$P_0=10$ & 51.24 & 28.35 & 11.46 & 3.09 & 0.54 \\ 
\hline 
$P_0=100$ & 64.12 & 48.05 & 34.94 & 24.69 & 16.97 \\ 
\hline 
$P_0=1000$ & 128.68 & 117.75 & 107.77 & 98.66 & 90.35 \\ 
\hline
\end{tabular}
\end{table}

In Table \ref{tab:prices2}, Call option prices are summed up, for $P_0=100$, by letting the expiration date $T$ and the delay $\tau$ vary. Again as expected, for Plain Vanilla Calls the price increases with time to maturity.
Increasing the delay reduces option prices; of course the spread is inversely related to the time to maturity of the option. 

\begin{table} [htb]
 \caption{Call option prices against different Strikes $K$ and for different values of $T$ and $tau$: $S_0=450,r=0.01,\mu_P=0.03,\sigma_P=0.35, \sigma_S=0.04$ and $P_0=100$.}
\label{tab:prices2}
\centering
\begin{tabular}{||c|c|c|c|c|c||}

\hline 
K & 400 & 425 & 450 & 475 & 500 \\ 
\hline
$T$=1 month, $\tau$=1 week & 52.85 & 33.09 & 18.27 & 8.81 & 3.71 \\ 
\hline 
$T$=1 month, $\tau$=2 weeks & 51.58 & 30.62 & 15.18 & 6.13 & 2.00 \\ 
\hline 
$T$=3 months, $\tau$=1 week  & 64.12 & 48.05 & 34.94 & 24.69 & 16.97 \\
\hline
$T$=3 months, $\tau$=2 weeks & 62.95 & 46.65 & 33.42 & 23.18 & 15.60 \\ 
\hline
 \end{tabular}
 \end{table}

In Tables \ref{tab:prices3} and \ref{tab:prices4}, analogous results are reported for Binary Options with outcome $A=100$; Table \ref{tab:prices3}  sums up Binary Cash-or-Nothing prices for $S_0=450$, $r=0.01$, $\mu_P=0.03$, $\sigma_P=0.35$, $\sigma_S=0.04$, $T=3$ months, $\tau= 1$ week (5 working days) against several strikes (in colums). Rows correspond to different values  $P_0$ for the initial sentiment on BitCoins. As expected, prices are decreasing with respect to strike prices. Here, \textit{in the money} (ITM) options values are decreasing with respect to $P_0$ while \textit{out of the money} (OTM) ones are increasing. The difference in ITM and OTM prices is large for low values of $P_0$, while it is very small for a high level of the initial sentiment factor in BitCoins. This may be justified by the fact that, when the sentiment factor in the BitCoin is strong, all bets are worth, even the OTM ones, since the underlying value is expected to blow up.
Binary Call prices decrease with respect to time to maturity for ITM options and increase for OTM options which become more likely to be exercised. The influence of the delay value is tiny, as for vanilla options, being larger for short time to maturities.

\begin{table}[htb]
\caption{Digital Cash or Nothing prices against different Strikes $K$ and for different values of $P_0$ on BitCoins. Market parameters are $S_0=450$, $r=0.01$, $\mu_P=0.03$, $\sigma_P=0.35$, $\sigma_S=0.04$, $T=3$ months, $\tau= 5$ days. The prize of the option is set to $A=100$.}
\label{tab:prices3}
\centering
\begin{tabular}{||c|c|c|c|c|c||}
\hline 
K & 400 & 425 & 450 & 475 & 500 \\ 
\hline 
$P_0=10$ & 97.17 & 82.77 & 50.31 & 18.87 & 4.24 \\ 
\hline 
$P_0=100$ & 70.07 & 58.38 &46.58 & 35.66 & 26.27 \\ 
\hline 
$P_0=1000$ & 45.70& 41.77 & 38.14 & 34.79 & 31.72 \\ 
\hline
\end{tabular}
 \end{table}

\begin{table} [htb]
 \caption{Digital Cash or Nothing prices against different Strikes $K$ and for different values of $T$ and $\tau$. Market parameters are $S_0=450$, $r=0.01$, $\mu_P=0.03$, $\sigma_P=0.35$, $\sigma_S=0.04$ and $P_0=100$. The prize of the option is set to $A=100$.}
\label{tab:prices4}
\centering 
\begin{tabular}{||c|c|c|c|c|c||}
\hline 
K & 400 & 425 & 450 & 475 & 500 \\ 
\hline 
$T$=1 month, $\tau$=1 week & 86.93 & 69.97 & 48.27 & 28.11 & 13.83 \\ 
\hline 
$T$=1 month, $\tau$=2 weeks & 91.50 & 74.23 & 48.69 & 24.84 & 9.80 \\ 
\hline
$T$=3 months, $\tau$=1 week & 70.07 & 58.38 & 46.58 & 35.66 & 26.27 \\ 
\hline
$T$=3 months, $\tau$=2 weeks & 71.21 & 59.10 & 46.77 & 35.36 & 25.62 \\ 
\hline 

\end{tabular}
\end{table}

\section{Model fitting on real data} \label{sec:fit}

This section is devoted to the estimation of the model in (\ref{eq:S})-(\ref{eq:BSdyn}) on real data; the overall procedure is aimed at describing the dynamics of BitCoin price changes over time. In order to fit the model we need data for both the BitCoin price and the sentiment indicator. Several proxies have been suggested for the latter; traditional indicators of market sentiment on a stock asset such as the number and volume of transactions \citep{MainDrivers} as well as sentiment indicators such as the number of Google searches or Wikipedia requests in the period under investigation \citep{bukovina2016sentiment}. Internet-based proxies are particularly interesting for the BitCoin price formation, being BitCoin itself an internet-based asset.
First we investigate which of the suggested proxies is consistent with the dynamics in \eqref{eq:BSdyn} and then we fit the full model to BitCoin prices. Daily data for BitCoin prices, volume and number of transactions are obtained through the website \emph{http://blockchain.info} which provides a mean price among main exchanges trading on BitCoin and the total exchanged volume.  Weekly data for the number of Google searches are downloaded from Google-Trends website (daily data are not available). Daily data for Wikipedia requests are obtained through the website \emph{http://tools.wmflabs.org/pageviews}.

\subsection{Proxies for Sentiment}\label{sec:confind} 
The univariate process  $P_t$ is a Geometric Brownian motion; the corresponding discrete process of logarithmic returns is given by $X:={X_i,i\in \bN}$, where $X_i=\log\left(\frac{P_i}{P_{i-1}}\right)$,  and it is well-known that these are independent an identically distributed with mean $\left(\mu_P-\frac{\sigma_P^2}{2}\right) \delta$ and variance $\sigma_P^2 \delta$.
Hence, in order to choose a suitable proxy, based on discrete observations, for the process $P$, we simply perform a stationary test and a normality test on the corresponding realizations $\lbrace x_i\rbrace_i$ of $X$, using respectively the augmented Dickey–Fuller test \citep{Tsay:timeseries}, and the one-sample Kolmogorov-Smirnov test \citep{Massey:kstest}. 
We consider the number and the volume of transactions as examples of traditional indicators and the number of Google searches and Wikipedia requests as examples of sentiment indicators. The first three series were investigated from 01/01/2012 to 31/03/2017 while Wikipedia requests are considered from 01/07/2015 to 31/03/2017. 

The tests are performed on the whole time series and on the sub-samples from 01/01/2015 to 31/03/2017.

In Table \ref{tab:stattest} and in Table \ref{tab:logtest} we report the outcomes of the tests.

\begin{table}[htbp]
\caption{Augmented Dickey-Fuller test for $X$ with $\alpha=0.05$}	
\centering
\captionsetup{justification=centering}
\label{tab:stattest}
\begin{tabular}{lrrrr}
	\toprule
	Time series & \multicolumn{4}{c}{p-value\footnotemark[1]} \\ 
	 & Num. trans. & Vol. trans. & Google searches & Wiki requests\\ 
	  \midrule
All series & 1.0000e-03*** & 1.0000e-03*** & 1.0000e-03*** & \multicolumn{1}{c}{-}\\
Sub-sample & 1.0000e-03*** & 1.0000e-03*** & 1.0000e-03*** & 1.0000e-03***\footnotemark[2]\\
	\bottomrule
	\end{tabular} 
\end{table}

\begin{table}[htbp]
\caption{Kolmogorov-Smirnov test for $X$ with $\alpha=0.05$}	
\centering
\captionsetup{justification=centering}
\label{tab:logtest}
\begin{tabular}{lrrrr}
	\toprule
	Time series & \multicolumn{4}{c}{p-value\footnotemark[1]} \\ 
	 & Num. trans. & Vol. trans. & Google searches & Wiki requests\\ 
	  \midrule
All series & 5.1011e-05**** & 4.3576e-07**** & 2.9246e-07**** & \multicolumn{1}{c}{-}\\
Sub-sample & 0.0035**   & 0.2152  & 0.1012  & 1.3575e-11****\footnotemark[2]\\
	\bottomrule
	\end{tabular} 
\end{table}

\footnotetext[1]{
\begin{tabular}{llllllllllll}
	*    & $P \leq 0.05$ & ; & ** & $P \leq 0.01$ & ; &	*** & ; & $P \leq 0.001$ & ; &	**** & $P \leq 0.0001$ \\		
	\end{tabular} 
}

\footnotetext[2]{Data available only from 01/07/2015.}

Non-stationarity is rejected for all proxies; besides, lognormality is not rejected, for the sub-sample from 01/01/2015 to 31/03/2017, both for the $Volume$ and the  number of $Google\; searches$.
In Figure \ref{fig:test_norm} we plot the histograms of the latter two proxies; it is evident that the log-normality fit for the $Volume$ time series is better.  

\begin{figure}[htbp]
\centering
\includegraphics[width=.45\textwidth]{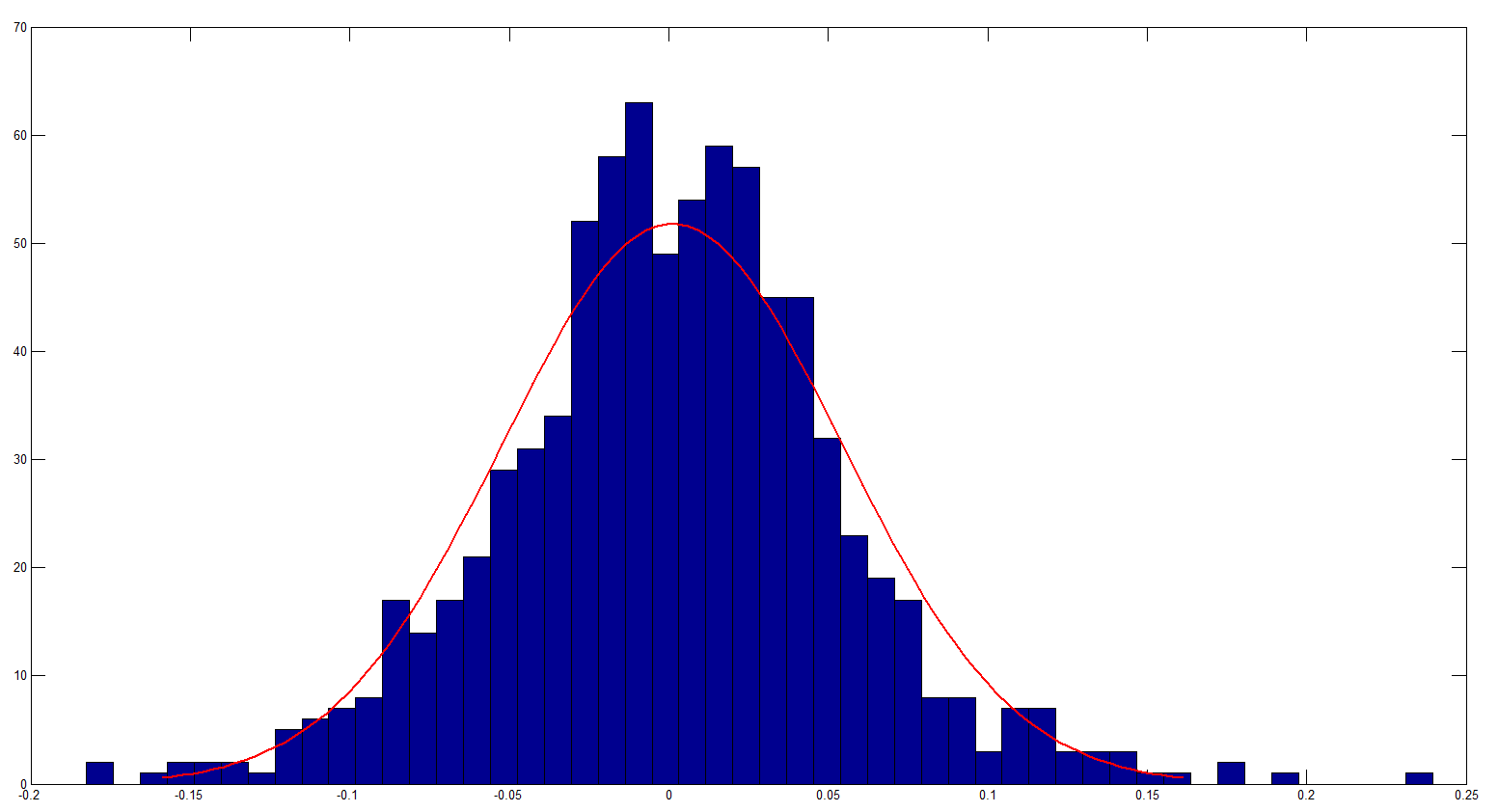}\hfil
\includegraphics[width=.45\textwidth]{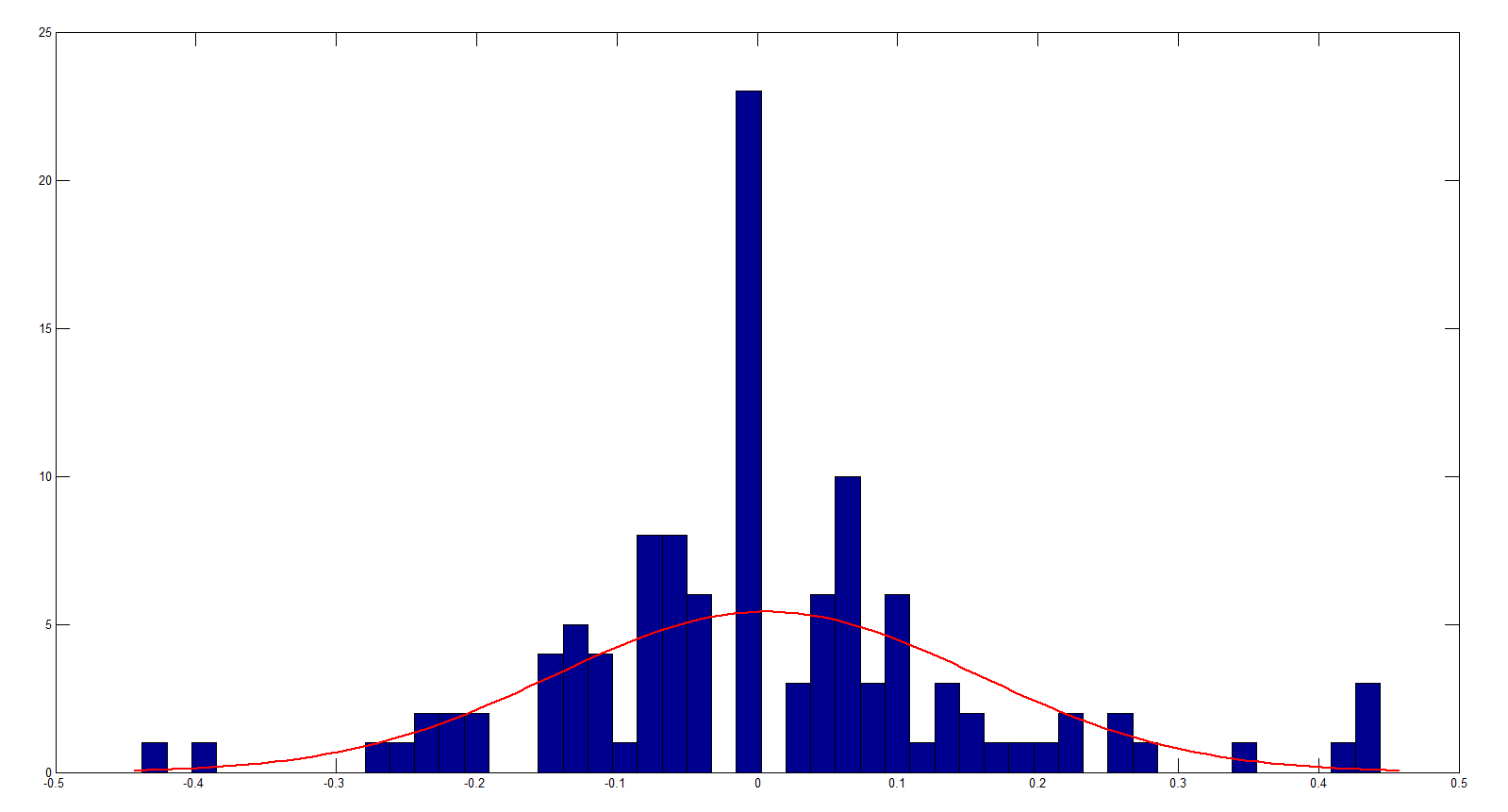}
\caption{Histogram with normal distribution fit for $P=volume\; of\; transactions$ (on the left) and $P=Google\; searches$ (on the right).}\label{fig:test_norm}
\end{figure}

\subsection{Estimation Results with QML method}\label{sec:est_res}
According to the outcomes in Table \ref{tab:logtest} we consider the daily time series of the volume of transactions and the weekly time series of the Google searches from 01/01/2015 to 31/03/2017 as suitable proxies for process $P$. We fit the model in \eqref{eq:S} and \eqref{eq:BSdyn} to the BitCoin price and the volume of transaction as well as to the BitCoin prices and number of Google searches by applying the Profile-Quasi-Likelihood (PML) method as describes in Section \ref{sec:fit}.  Google-trends provides a scaled time series for the number of searches so the the maximum value is $100$; in order to compare outcomes we do the same for the $Volume$ time series. 
In Figures \ref{fig:path_goog} and \ref{fig:path_vol} we plot the time series of BitCoin price with Google searches (weekly) and the time series of BitCoin price with volume of transactions (daily), respectively.

\begin{figure}[htbp]
\includegraphics[scale=0.38]{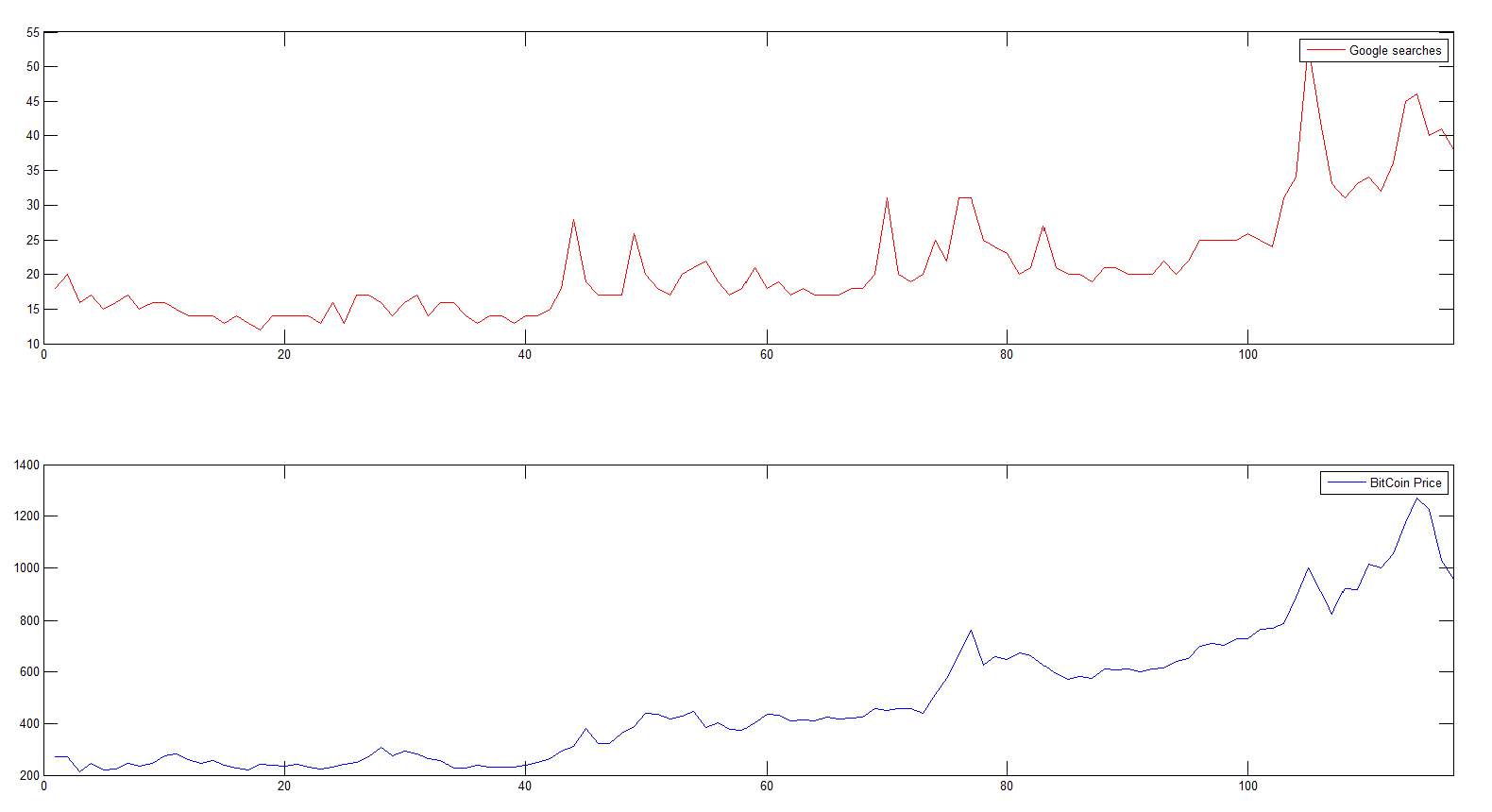} 
\caption{Weekly time series of the Google searches (above) and BitCoin price (bottom) from 01/01/2015 to 31/03/2017.}\label{fig:path_goog}
\end{figure}

\begin{figure}[htbp]
\includegraphics[scale=0.38]{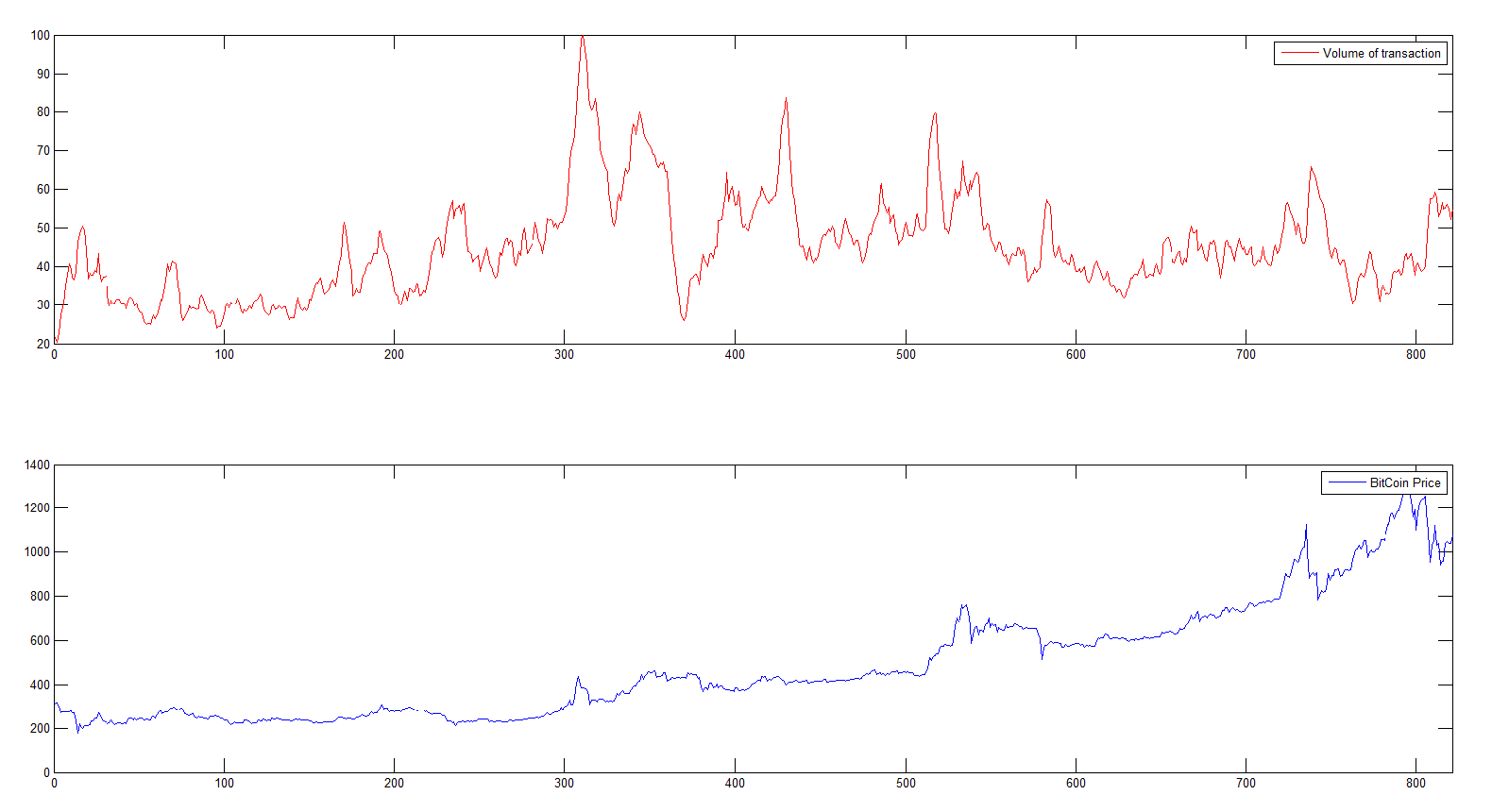} 
\caption{Daily time series of the volume of transactions (above) and BitCoin price (bottom) from 01/01/2015 to 31/03/2017.}\label{fig:path_vol}
\end{figure}

In what follows we assume that $\Delta$=1 week is the observation step for BitCoin log-returns.

\subsubsection{Sentiment maesured by Volume}\label{sec:voltrans}
Given daily observations  $ \lbrace P_i \rbrace _i$ of the volume of transactions we are able to compute the cumulative weekly sentiment $\lbrace A_i \rbrace_i$; for $\tau=0$ $A_i$ is simply the mean volume during the preceding week i.e.
\begin{equation}
\label{eq:apprAi}
A_i=\int_{\left(i-1\right)\Delta}^{i\Delta}P_{u} \ud u =\sum_{j=1}^{7}\int_{\left(i-1\right)\Delta+\left(j-1\right)\delta}^{\left(i-1\right)\Delta+j\delta} P_{u} \ud u =\sum_{j=1}^{7} P_{\left(i-1\right)\Delta+j\delta} \delta=\frac{\sum_{j=1}^{7} P_{\left(i-1\right)\Delta+j\delta}}{7} \Delta,
\end{equation}
The generalization to $\tau>0$ is straightforward as soon as we assume $\tau=r\delta$ for some positive integer $r$; in which case $A_i$ would be the mean volume of the 7 days preceding time $i\Delta-r\delta$.

By applying the profile quasi maximum likelihood we obtain $\tau=5$ days and the following estimates of other the parameters:

\begin{table}[htbp]
\caption{Parameter fit with $\tau=5$}	
\centering
\captionsetup{justification=centering}
\label{tab:fitparvol2}
\begin{tabular}{rrrrrr}
	\toprule
Variable    &  Fitted value  &  Std. error &  t-value & P($>\lvert t \rvert$)  \\ 
	  \midrule
$\mu_P$		&	1.0404	&	0.7373	&	1.4110	&	0.1610	\\
$\sigma_P$	&	1.1092	&	0.0725	&	15.2924	&	0.0000	\\
$\mu_S$		&	0.0153	&	0.0083	&	1.8434	&	0.0679	\\
$\sigma_S$	&	0.0830	&	0.0054	&	15.2403	&	0.0000	\\
	\bottomrule
	\end{tabular} 
\end{table} 
In order to asses parameters significance the $t$-stat is computed, for each parameter, under the null hypothesis that its value is zero;  Table \ref{tab:fitparvol2} shows that $\mu_P$ is not statistically significant and that $\mu_S$ is weakly significant.
Finally we evaluate the confidence region for $\tau$ using the \eqref{eq:confreg} and $\tau\in \lbrace 0,1,2,\ldots,10 \rbrace$ days. We find that  $\tau=2,3,4$ days belong to the confidence region. Estimates of other parameters are indeed very similar in any of such cases and analogous comments apply.
%



%

\subsubsection{Sentiment measured by Google Searches}
Assume now that Google searches are representative of sentiment about BitCoin. Since we have weekly data we should aggregate both Google searches and BitCoin returns to a coarser observation step; however this would reduce the time series length dramatically and corresponding estimates might be unreliable. Hence we assume that the available observations correspond to the cumulative sentiment time series $A_i$ and we assume it exist a non-negative integer $c$ such that $\tau=c\Delta$ i.e. in this case $\tau$ is on a weekly scale and not on a daily scale like in the previous case. 

By applying the profile quasi maximum likelihood we obtain $\tau=1$ week and the following estimates of other parameters:

\begin{table}[htbp]
\caption{Parameter fit with $\tau=1$}	
\centering
\captionsetup{justification=centering}
\label{tab:fitpargoog1}
\begin{tabular}{rrrrrr}
	\toprule
Variable    &  Fitted value  &  Std. error &  t-value & P($>\lvert t \rvert$)  \\ 
	  \midrule
$\mu_P$		&	0.9573	&	0.7315	&	1.3087	&	0.1935	\\
$\sigma_P$	&	1.0818	&	0.0714	&	15.1611	&	0.0000	\\
$\mu_S$		&	0.0181	&	0.0092	&	1.9534	&	0.0535	\\
$\sigma_S$	&	0.0867	&	0.0057	&	15.1774	&	0.0000	\\
	\bottomrule
	\end{tabular} 
\end{table} 

The $t$-value is reported for all parameters as in Section \ref{sec:voltrans}.  Again,  $\mu_P$ is not statistically significant and that $\mu_S$ is weakly significant.
Finally we evaluate the confidence regions for $\tau$ using the \eqref{eq:confreg} and $\tau\in \lbrace 0,1,2,\ldots,10 \rbrace$ weeks. We find that there are not other values of $\tau$ that belong to the confidence region.

\section{Model performance on market option prices} \label{sec:assessing}
Recently some online platforms have appeared where it is possible to trade on plain vanilla options on the BitCoin. A relevant platform where bid-ask quotes are publicly available is www.deribit.com; we will consider option prices on this website as "market prices".
\begin{figure}[htbp]
\centering
\includegraphics[width=.70\textwidth]{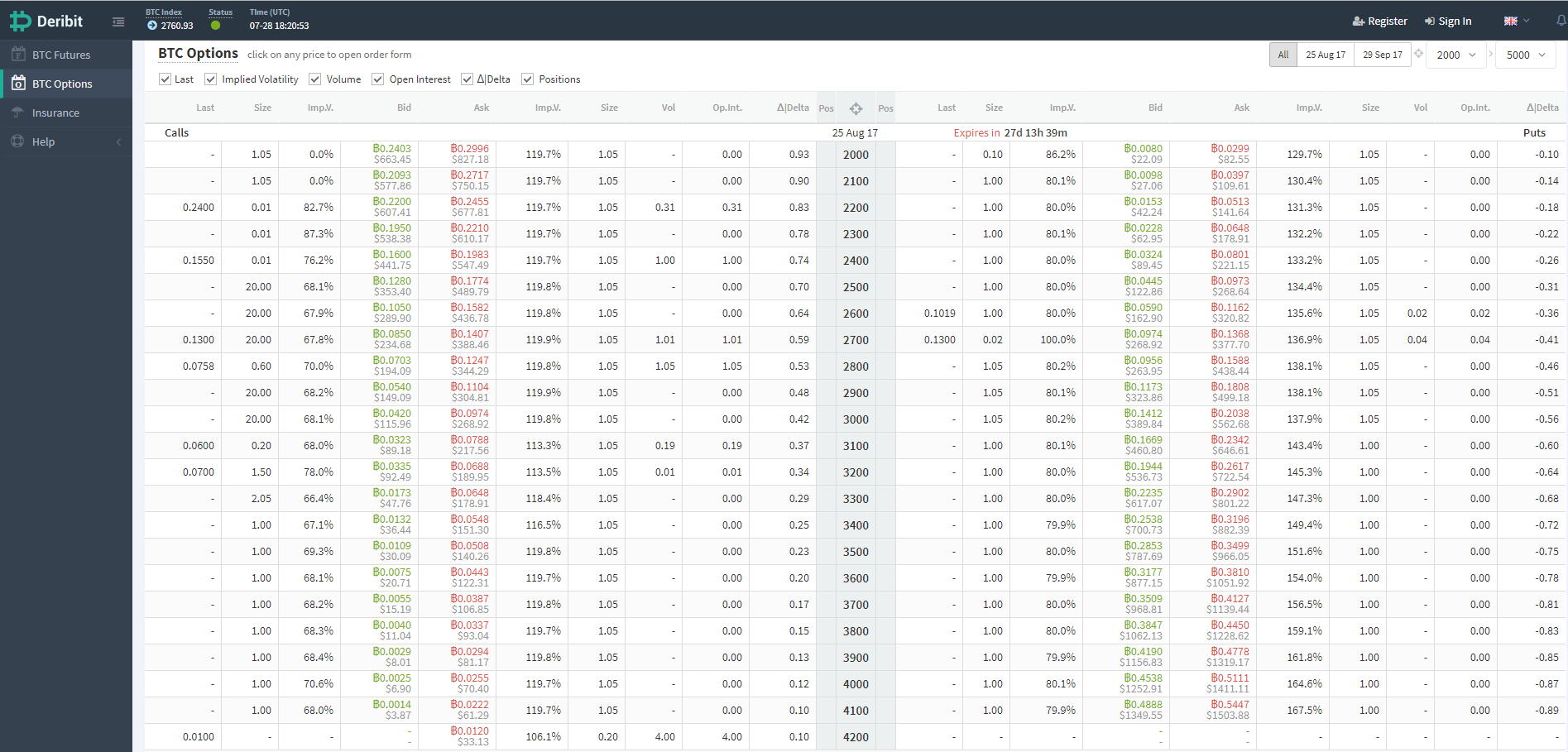}
\caption{Screenshot of the website \url{www.deribit.com} on July 28, 2017}\label{fig:screen}
\end{figure}

In Figure \ref{fig:screen} the screenshot of the website on July 28, 2017; we will consider the mid-value of the Bid-Ask range as a benchmark for assessing model performance discarding options for which there was no transactions. Concerning the price of the underlying BitCoin, it was set as the price of the BitCoin index available from  blockchain.info at the same time as the option data download, which also appears in the north-west corner of the screenshot. Every day two different expiration dates are available corresponding to a one month and two months maturity at issue. 
We are aware of possible synchronicity problems but as a first evaluation of the suggested pricing formula we intentionally neglect this friction.
Model prices are then compared with corresponding market prices by computing the Root Mean Squared Error of the model across all the considered sample of options and of suitably chosen sub-samples. The same is done when prices are computed with the benchmark no-sentiment model chosen as Black and Scholes model. Of course the latter is estimated on the same time-series as those considered in Subsection \ref{sec:est_res}. Only the volatility estimation matters since we set $r=0$ for both models. 

In Tables \ref{tab:optprice28}-\ref{tab:optprice28bis} we report market data for July 28, 2017 as well as model option prices in the suggested model and for the Black and Scholes benchmark. On the chosen day available maturities were complete i.e. August 25 and September 29.

\begin{table}[htbp]
\caption{Option Price with t=28 July and T=25 August}	
\centering
\captionsetup{justification=centering}
\label{tab:optprice28}
\begin{tabular}{lrrrrr}
	\toprule
K    &   Bid  &  Ask   & Model (Vol.) & Model (Google)  &  BS   \\ 
	  \midrule
2200 & 0.2200 &	0.2455 & 0.2125	 	  & 0.2196			& 0.2110\\
2300 & 0.1956 &	0.2210 & 0.1820	      & 0.1909			& 0.1799\\
2400 & 0.1600 &	0.1983 & 0.1538	      & 0.1645			& 0.1510\\
2500 & 0.1280 &	0.1774 & 0.1281	      & 0.1404			& 0.1248\\
2600 & 0.1050 &	0.1582 & 0.1052	      & 0.1187			& 0.1015\\
2700 & 0.0850 &	0.1407 & 0.0867	      & 0.0995			& 0.0812\\
2800 & 0.0703 &	0.1247 & 0.0696	      & 0.0827			& 0.0640\\
2900 & 0.0540 &	0.1104 & 0.0482	      & 0.0698			& 0.0497\\
	\bottomrule
\multicolumn{3}{l}{RMSE mean bid/ask} & 0.0753 & 0.0395 & 0.0833 \\  
	\end{tabular} 
\end{table}

\begin{table}[htbp]
\caption{Option Price with with t=28 July and T=29 September }	
\centering
\captionsetup{justification=centering}
\label{tab:optprice28bis}
\begin{tabular}{lrrrrr}
	\toprule
K    &   Bid   	&  Ask   	& Model (Vol.) & Model (Google)  &  BS   \\ 
	  \midrule
2200 &	0.2108  &	0.2764	&	0.2350		&	0.2981		&	0.2297	\\
2300 &	0.1881	&	0.2547	&	0.2091		&	0.2779		&	0.2029	\\
2400 &	0.2004	&	0.2343	&	0.1851		&	0.2589		&	0.1781	\\
2500 &	0.1700	&	0.2154	&	0.1631		&	0.2411		&	0.1555	\\
2600 &	0.1650	&	0.1950	&	0.1431		&	0.2244		&	0.1349	\\
2700 &	0.1168	&	0.1710	&	0.1249		&	0.2087		&	0.1164	\\
2800 &	0.1044	&	0.1630	&	0.1086		&	0.1941		&	0.0999	\\
2900 &	0.0934	&	0.1490	&	0.0940		&	0.1804		&	0.0853	\\
	\bottomrule
	\multicolumn{3}{l}{RMSE mean bid/ask} & 0.0723 & 0.1535 &   0.0932  \\  
	\end{tabular} 
\end{table} 

In Tables \ref{tab:RMSE_vol2}-\ref{tab:RMSE_goog2} we sum up the outcomes for the RMSE for All options and for subsamples obtained by considering the same expiration date respectively when sentiment is conveyed by the Volume and by Google searches.
The same in Tables \ref{tab:RMSE_vol1}-\ref{tab:RMSE_goog1} where the subsample are obtained according to moneyness. 
Highlighted in the table (in bold) are cases where the plain vanilla Black and Scholes model does better than the model suggested here. Overall our pricing formula does much better than the benchmark in all cases. It is worth noticing that Google Searches tend to overprice long-term options while it is the very best for shorter term options as if this sentiment indicator is driven by enthusiasm giving a sudden impulse to options. 

\begin{table}[htbp]
\caption{RMSE Option Price with Volume of Transactions}	
\centering
\captionsetup{justification=centering}
\label{tab:RMSE_vol1}
\begin{tabular}{lrrr}
	\toprule
	Options 		& Num. &	RMSE Model	&	RMSE BS	\\
	  \midrule
	All 			& 144  &	0.3089		&	0.4879	\\
	Very Shorts 	& 16   &	0.0898		&	\textbf{0.0817}	\\
	1 Months 		& 32   &	0.1132		&	0.1448	\\
	2 Months 		& 32   &	0.1306		&	0.1751	\\
	ITM 			& 54   &	0.1536		&	0.2617	\\
	ATM 			& 36   &	0.1687		&	0.2625	\\
	OTM 			& 54   &	0.2082		&	0.3173	\\
	\bottomrule
	\end{tabular} 
\end{table} 

\begin{table}[htbp]
\caption{RMSE Option Price by Date with Volume of Transactions}	
\centering
\captionsetup{justification=centering}
\label{tab:RMSE_vol2}
\begin{tabular}{lrrrrrr}
	\toprule
 & \multicolumn{2}{c}{Near} & \multicolumn{2}{c}{Next} & \multicolumn{2}{c}{All Day}		\\	
Date	&	RMSE Model	&	RMSE BS	&	RMSE Model	&	RMSE BS	&	RMSE Model	&	RMSE BS	\\
	  \midrule
20/07/2017 & 0.0712	&\textbf{0.0648}&	0.0627	&	0.1768	&	0.0949	&	0.1883	\\
21/07/2017 & 0.0547	&\textbf{0.0499}&	0.0172	&	0.1428	&	0.0574	&	0.1512	\\
22/07/2017 & -		&	-			&	0.0180	&	0.1394	&	0.0180	&	0.1394	\\
23/07/2017 & -		&	-			&	0.0556	&	0.1475	&	0.0556	&	0.1475	\\
24/07/2017 & -		&	-			&	0.0773	&	0.1420	&	0.0773	&	0.1420	\\
25/07/2017 & -		&	-			&	0.1056	&	0.1445	&	0.1056	&	0.1445	\\
26/07/2017 & -		&	-			&	0.1252	&	0.1519	&	0.1252	&	0.1519	\\
27/07/2017 & -		&	-			&	0.1306	&	0.1508	&	0.1306	&	0.1508	\\
28/07/2017 & 0.0753	&	0.0833		&	0.0723	&	0.0932	&	0.1044	&	0.1250	\\
29/07/2017 & 0.0562	&	0.0717		&	0.0493	&	0.0919	&	0.0748	&	0.1165	\\
30/07/2017 & 0.0531	&	0.0710		&	0.0705	&	0.0850	&	0.0883	&	0.1107	\\
31/07/2017 & 0.0343	&	0.0621		&	0.0664	&	0.0795	&	0.0748	&	0.1009	\\
	\bottomrule
	\end{tabular} 
\end{table}

\begin{table}[htbp]
\caption{RMSE Option Price with Google Searches}	
\centering
\captionsetup{justification=centering}
\label{tab:RMSE_goog1}
\begin{tabular}{lrrr}
	\toprule
	Options 		& Num. &	RMSE Model	&	RMSE BS	\\
	  \midrule
	All 			& 136	&	0.3380		&	0.4854	\\
	Very Shorts 	& 8		&	0.0367		&	0.0648	\\
	1 Months 		& 32	&	0.0621		&	0.1448	\\
	2 Months  		& 32	&	0.2408		&	\textbf{0.1751}	\\
	ITM 			& 51	&	0.1912		&	0.2617	\\
	ATM 			& 34	&	0.1613		&	0.2612	\\
	OTM 			& 51	&	0.2273		&	0.3149	\\
	\bottomrule
	\end{tabular} 
\end{table} 

\begin{table}[htbp]
\caption{RMSE Option Price by Date with Google Searches}	
\centering
\captionsetup{justification=centering}
\label{tab:RMSE_goog2}
\begin{tabular}{lrrrrrr}
	\toprule
 & \multicolumn{2}{c}{Near} & \multicolumn{2}{c}{Next} & \multicolumn{2}{c}{All Day}		\\	
Date	&	RMSE Model	&	RMSE BS	&	RMSE Model	&	RMSE BS	&	RMSE Model	&	RMSE BS	\\
	  \midrule
20/07/2017 & 0.0367	&	0.0648	&	0.0578	&	0.1768	&	0.0684	&	0.1883	\\
21/07/2017 & -		&	-		&	0.0158	&	0.1428	&	0.0158	&	0.1428	\\
22/07/2017 & -		&	-		&	0.0253	&	0.1394	&	0.0253	&	0.1394	\\
23/07/2017 & -		&	-		&	0.0891	&	0.1475	&	0.0891	&	0.1475	\\
24/07/2017 & -		&	-		&	0.0935	&	0.1420	&	0.0935	&	0.1420	\\
25/07/2017 & -		&	-		&	0.0961	&	0.1445	&	0.0961	&	0.1445	\\
26/07/2017 & -		&	-		&	0.1018	&	0.1519	&	0.1018	&	0.1519	\\
27/07/2017 & -		&	-		&	0.1028	&	0.1508	&	0.1028	&	0.1508	\\
28/07/2017 & 0.0395	&	0.0833	&	0.1535	&\textbf{0.0932}	&	0.1585	&\textbf{0.1250}	\\
29/07/2017 & 0.0275	&	0.0717	&	0.1601	&\textbf{0.0919}	&	0.1625	&\textbf{0.1165}	\\
30/07/2017 & 0.0308	&	0.0710	&	0.0649	&	0.0850	&	0.0718	&	0.1107	\\
31/07/2017 & 0.0243	&	0.0621	&	0.0677	&	0.0795	&	0.0720	&	0.1009	\\
	\bottomrule
	\end{tabular} 
\end{table}

\section{Concluding remarks}\label{sec:remarks}

In this paper we borrow the idea, suggested in recent literature, that BitCoin prices are driven by sentiment on the BitCoin system and underlying technology. 
Main references in this area are \citet{MainDrivers, GoogleTrends, SentimentAnalysis,bukovina2016sentiment}. In order to account for such behavior we develop a model in continuous time which describes the dynamics of two factors, one representing the sentiment index on the BitCoin system and the other representing the BitCoin price itself, which is directly affected by the first factor; we also take into account a delay between the sentiment index and its delivered effect on the BitCoin price. 
We investigate statistical properties of the proposed model and we 
show its arbitrage-free property. Under our model assumption we derived a closed form approximation for the joint density of the discretely observed process and we proposed a statistical estimation for that model. By applying the classical risk-neutral evaluation we are able to derive a quasi-closed formula for European style derivatives on the BitCoin with special attention of Plain Vanilla and Binary options for which a market already exists (e.g. https://deribit.com, https://coinut.com ). 
Of course sentiment about BitCoin or, more generally, on cryptocurrencies or IT finance is not directly observed  but several variables may be considered as indicators. Here, we analyzed the volume and more unconventional sentiment indicators such as the number of Google searches and the number of Wikipedia requests about the topic (as suggested \citet{GoogleTrends, SentimentAnalysis}). 
First of all, we investigated whether these proxies were consistent with the suggested model and we proved that both the volume of transactions and the number of Google searches give a good fit of the dynamics described in the model. 
Finally we fit the model using real data of BitCoin price with Volume of transactions and Google searches respectively and we provided the estimation results.   
Several open problems are left for future research. As a first issue would like to address a multivariate extension of the model in order to take into consideration the special feature of BitCoin being traded in different exchanges and related stylized facts and to investigate whether there are arbitrage opportunities between different BitCoin exchanges. Besides we would like to investigate whether the model is suitable to describe bubble effects which have been also evidenced for the BitCoin price dynamics.

\medskip

 \begin{center}
{\bf Acknowledgments}
\end{center}
The first and the second named authors are grateful to Banca d'Italia and Fondazione Cassa di Risparmio di Perugia for the financial support.

\bibliographystyle{plainnat}
\bibliography{biblio_BS1}

\appendix

\section{Levy approximation}\label{appendix:levy}

In \citet{levy1992pricing} the author proves that the distribution of the mean integrated Brownian motion $\frac{1}{s} \int_0^s P_u \ud u$ can be approximated with a log-normal distribution, at least for suitable values of the model parameters $\mu_P, \sigma_P$; the parameters of the approximating log-normal distribution are obtained by applying a moment matching technique. 
Set
\begin{equation}\label{def:IP}
IP(s):=\int_0^s P_u \ud u,\quad s>0.
\end{equation}
Of course, the distribution of  $IP(s)$ can also be approximated by a log-normal for $s >0$. By applying the moment matching technique the parameters of the corresponding log-Normal distribution for $IP(s)$ are given by
$$
\alpha(s)=\log\left(\frac{\esp{ IP(s)}^2}{\sqrt{\esp{IP(s)^2}}}\right),
$$
$$
\nu^2(s)=\log \left(\frac{\esp{IP(s)^2}}{\esp{IP(s)}^2}\right),
$$
The approximate distribution density function of $IP(s)$ is thus given by
$$
f_{IP(s)}(x)= \mathcal{LN} pdf_{\alpha(s),\nu^2(s)} \left(x\right), \quad \mbox{ if } s > 0
$$

where $ \mathcal{LN} pdf_{m,v} $ denotes the probability distribution function of a log-normal distribution with parameters $m$ and $v$, defined as

$$
\mathcal{LN} pdf_{m,v}(y)=\frac{1}{y\sqrt{2\pi v}} e^{-\frac{(log (y)-m)^2}{2v}}, \quad \forall\ y \in \R^+.
$$

In the paper the above approximation is applied twice with completely different purposes. In Section \ref{sec:option-pricing} it is applied to derive an approximate distribution for the integrated sentiment process starting at $t=0$ i.e. to $X_{0,T}^\tau=X_\tau^\tau +IP(T-\tau)$.
Note that $X_{0,T}^\tau-X_\tau^\tau =IP(T-\tau)$ hence the derivations of its distribution is trivial once that of $IP(T-\tau)$ is known.

In Section \ref{sec:fit}, once $\tau<\Delta$ is assigned, the Levy approximation is applied to derive the distribution of $A_1$ and of $A_i$ given $A_{i-1}$ where
 $A_1=X_\tau^\tau + IP(\Delta-\tau)$ and, for $i\geq 2$, 
 $A_i=\int_{(i-1)\Delta-\tau}^{i\Delta-\tau} P_u du=\int_{-\tau}^{\Delta-\tau} P_{u+(i-1)\Delta} \ud u =P_{(i-1)\Delta}\int_{-\tau}^{\Delta-\tau} P_{u} \ud u =P_{(i-1)\Delta} \left( X_\tau^\tau +IP(\Delta-\tau) \right)$.
 

\section{Technical proofs}\label{appendix:technical}

\begin{proof}[Proof of Lemma \ref{th:means}]
In order to prove the Lemma let us first compute the mean and the variance of $IP(s)$ given in \eqref{def:IP} for each $s>0$.

Fix $s >0$. Since $P_u > 0$ for each $u \in  (0, s]$, by applying Fubini’s theorem we get
$$
\esp{IP(s)} = \esp{\int_{0}^{s}P_{u}\ud u}=\int_{0}^{s}\esp{P_{u}}\ud u,
$$
where, for each $u \ge 0$, we have
\begin{align*}
\esp{P_{u}} & 
 = \phi(0) e^{\left(\mu_{P}-\frac{\sigma_{P}^{2}}{2}\right)u} \esp{e^{\sigma_{P}Z_{u}}}
= \phi(0)e^{\mu_{P}u},
\end{align*}
since $P$ is a geometric Brownian motion with $P_0=\phi(0)$.
Hence
$$
\esp{IP(s)}=\phi(0)\int_{0}^{s} e^{\mu_{P}u} \ud u=\frac{\phi(0)}{\mu_{P}}\left(e^{\mu_{P}(s)}-1\right).
$$
As for the variance of $IP(s)$, 
we have
\begin{align}
\mathbb V{\rm ar}[IP(s)] &
= \mathbb V{\rm ar}\left[\int_0^{s} P_u \ud u\right]
= \esp{\left(\int_0^{s} P_u \ud u\right)^2} - \esp{\int_0^{s} P_u \ud u}^2,
\end{align}
with
\begin{align}
\esp{\left(\int_0^{s} P_u \ud u\right)^{2}} & = 
2\esp{\int_{0}^{s}P_{v}dv\int_{0}^{v}P_{u}\ud u}
=2\esp{\int_{0}^{s}\int_{0}^{v}P_{u}P_{v}\ud v\ud u}\\
& = 2\int_{0}^{s}\int_{0}^{v}\esp{P_{u}P_{v}}\ud v\ud u, \label{eq:MP2}
\end{align}
where the last equality again holds thanks to Fubini’s theorem. Moreover, by the independence property of the increments of Brownian motion, for $0 < u < v \leq s$, we get
\begin{align*}
\esp{P_{u}P_{v}} & = \esp{P_{u}^{2}e^{\left(\mu_{P}-\frac{\sigma_{P}^{2}}{2}\right)(v-u)+\sigma_{P}\left(Z_{v}-Z_{u}\right)}}\\
& =e^{\left(\mu_{P}-\frac{\sigma_{P}^{2}}{2}\right)(v-u)} \esp{P_{u}^{2}\esp{e^{\sigma_{P}\left(Z_{v}-Z_{u}\right)}|\F_u^P}}\\
& =e^{\left(\mu_{P}-\frac{\sigma_{P}^{2}}{2}\right)(v-u)} \esp{P_{u}^{2}}\esp{e^{\sigma_{P}\left(Z_{v}-Z_{u}\right)}}\\
& =e^{\left(\mu_{P}-\frac{\sigma_{P}^{2}}{2}\right)(v-u)} \esp{P_{u}^{2}}\esp{e^{\frac{\sigma_{P}^{2}\left(v-u)\right)}{2}}}=e^{\mu_{P}(v-u)}\esp{P_{u}^{2}}.
\end{align*}
Further,
\begin{align*}
\esp{P_{u}^{2}} & =\phi^2(0)e^{2\left(\mu_{P}-\frac{\sigma_{P}^{2}}{2}\right)u}\esp{e^{2\sigma_{P}Z_{u}}}
=\phi^2(0)e^{\left(2\mu_{P}+\sigma_{P}^{2}\right)u}.
\end{align*}
Hence
\begin{equation} \label{eq:pp}
\esp{P_{u}P_{v}}=\phi^2(0)e^{\mu_{P}(v-u)}e^{\left(2\mu_{P}+\sigma_{P}^{2}\right)u},
\end{equation}
and by plugging \eqref{eq:pp} into \eqref{eq:MP2}, we have
\begin{align*}
\esp{IP(s)^{2}}
& = 2\phi^2(0)\int_{0}^{s}e^{\mu_{P}v} \int_{0}^{v}e^{\left(\mu_{P}+\sigma_{P}^{2}\right)u} \ud u\ud v\\
&  = \frac{2\phi^{2}(0)}{\left(\mu_{P}+\sigma_{P}^{2}\right)\left(2\mu_{P}+\sigma_{P}^{2}\right)}\left(e^{\left(2\mu_{P}+\sigma_{P}^{2}\right)s} -1\right)-\frac{2\phi^{2}(0)}{\mu_{P}\left(\mu_{P}+\sigma_{P}^{2}\right)}\left(e^{\mu_{P}s} -1\right).
\end{align*}
Finally, gathering the results we get
\begin{align*}
\mathbb V{\rm ar}[IP(s)] & = 
\frac{2\phi^{2}(0)}{\left(\mu_{P}+\sigma_{P}^{2}\right)\left(2\mu_{P}+\sigma_{P}^{2}\right)}\left[e^{\left(2\mu_{P}+\sigma_{P}^{2}\right)s} -1\right] \\
& \qquad - \frac{2\phi^{2}(0)}{\mu_{P}\left(\mu_{P}+\sigma_{P}^{2}\right)}\left(e^{\mu_{P}s} -1\right)-\left(\frac{\phi(0)}{\mu_{P}}\left(e^{\mu_{P} s} -1\right)\right)^2.
\end{align*}

Note that $X_t^\tau$, with $t\in [0,\tau]$, and $X_{t,T}^\tau$, with $t <T \leq \tau$,
are fully deterministic and the computation is trivial.

To prove points (i)-(iii), it suffices to observe that
$$
X_t^\tau=X_\tau^\tau+IP(t-\tau),\quad t > \tau, 
$$
$$
X_{t,T}^\tau=\int_{t-\tau}^0 \phi(u)\ud u + IP(T-\tau),\quad t \leq \tau < T, 
$$
and the computation easily follows once those of $IP(s)$ are known for $s>0$. 

To prove point (ii), it is worth noticing that given $0 \leq v<s$
\begin{align}
IP(s)-IP(v) & =\int_v^s P_u \ud u = \int_v^{s} P_v e^{\left(\mu_P-\frac{\sigma_P^2}{2}\right)(u-v)+\sigma_P(Z_u-Z_v)} \ud u\stackrel{({\rm law})}{=} P_v\int_v^{s} e^{\left(\mu_P-\frac{\sigma_P^2}{2}\right)(u-v)+\sigma_P Z_{u-v} } \ud u  \\
		& = P_v\int_0^{s-v} e^{\left(\mu_P-\frac{\sigma_P^2}{2}\right)r+\sigma_P Z_r} \ud r=P_v IP(s-v), \label{eq:u-t}
\end{align}
where $r=u-v$.



To obtain the desired result it suffices to note that, for $\tau\leq t<T$,
\begin{align*}
X_{t,T}^\tau & =IP(T-\tau)-IP(t-\tau)
\end{align*}
and apply \eqref{eq:u-t}. The computation of the mean and variance of the above difference is straightforward given the independence of Brownian increments.
\end{proof}

\begin{proof}[Proof of Lemma \ref{lem:measure}]

Firstly, we prove that formula \eqref{eq:L} defines a probability measure $\Q$ equivalent to $\P$ on $(\Omega,\F_T)$. This means we need to show that $L^\Q$ is an $(\bF,\P)$-martingale, that is, $\esp{L_T^\Q}=1$. Since the $\bF$-progressively measurable process $\gamma$ can be suitably chosen, to prove this relation we can assume $\gamma \equiv 0$, without loss of generality.  
Set 
\begin{equation}\label{def:alpha}
\alpha_t := -\frac{\mu_S P_{t-\tau}-r(t)}{\sigma_S \sqrt{P_{t-\tau}}}, \quad t \in [0,T].
\end{equation}
We observe that since $\phi(t) > 0$, for each $t \in [-L,0]$, in \eqref{eq:BSdyn},
by Theorem \ref{th:sol}, point (i), we have that $P_{t-\tau} > 0$, $\P$-a.s. for all $t \in [0,T]$, so that the process $\alpha=\{\alpha_t,\ t \in [0,T]\}$ given in \eqref{def:alpha} is well-defined, as well as the random variable $L_T^\Q$.
Clearly, $\alpha$ is an $\bF$-progressively measurable process. Moreover, since the trajectories of the process $P$ are continuous, then $P$ is almost surely bounded on $[0,T]$ and this implies that
$\int_0^T|\alpha_u|^2 \ud u < \infty$ $\P$-a.s.;
on the other hand, the condition $\phi(t)>0$, for every $t \in [-L,0]$, implies that almost every path of $\left\{\frac{1}{\sigma_S \sqrt{P_{t-\tau}}},\ t \in [0,T]\right\}$ is bounded on the compact interval $[0,T]$.
Set $\F_t^P:=\F_0^P=\{\Omega,\emptyset\}$, for $t \leq 0$. Then, $\alpha_u$, for every $u \in [0,T]$, is $\F_{T-\tau}^P$-measurable. 
Since $Z_{u-\tau}$ is independent of $W_u$, for every $u \in [\tau, T]$, the stochastic integral $\int_0^T \alpha_u \ud W_u$ conditioned on $\F_{T-\tau}^P$ has a normal distribution with mean zero and variance $\int_0^T |\alpha_u|^2 \ud u$. Consequently, the formula for the moment generating function of a normal distribution implies
$$
\esp{e^{\int_0^T \alpha_u \ud W_u}\bigg{|}\F_{T-\tau}^P}=e^{\frac{1}{2}\int_0^T |\alpha_u|^2 \ud u},
$$
or equivalently
\begin{equation}\label{eq:gen}
\esp{e^{\int_0^T \alpha_u \ud W_u-\frac{1}{2}\int_0^T |\alpha_u|^2 \ud u}\bigg{|}\F_{T-\tau}^P}=1.
\end{equation}
Taking the expectation of both sides of \eqref{eq:gen} immediately yields $\esp{L_T^\Q}=1$.
Now, set $\widetilde S_t:= \ds \frac{S_t}{B_t}$, for each $t \in [0,T]$.
It remains to verify that the discounted BitCoin price process $\widetilde S=\{\widetilde S_t,\ t \in [0,T]\}$ is an $(\bF,\Q)$-martingale.
By Girsanov's theorem, under the change of measure from $\P$ to $\Q$, we have two independent $(\bF,\Q)$-Brownian motions $W^\Q=\{W_t^\Q,\ t \in [0,T]\}$ and $Z^\Q=\{Z_t^\Q,\ t \in [0,T]\}$ defined respectively by
\begin{align*}
W_t^\Q & := W_t - \int_0^t\alpha_u \ud u,\quad t \in [0,T],\\
Z_t^\Q & := Z_t + \int_0^t\gamma_s \ud s,
\quad t \in [0,T].
\end{align*}
Under the martingale measure $\Q$, the discounted BitCoin price process $\widetilde S$ satisfies the following dynamics
\begin{align*}
\ud \widetilde S_t & =  \widetilde S_t\sigma_S\sqrt{P_{t-\tau}}\ud W_t^\Q, \quad \widetilde S_0=s_0 \in \R_+,
\end{align*}
which implies that $\widetilde S$ is an $(\bF,\Q)$-local martingale. Finally, proceeding as above it is easy to check that $\widetilde S$ is a true $(\bF,\Q)$-martingale.

\end{proof}

\begin{proof}[Proof of Lemma \ref{lemma:distr}] 
By applying \citet{levy1992pricing} we have (see Appendix \ref{appendix:levy}) that the distribution of $A_1-X_\tau^\tau$ can be approximated by a log-normal with parameters
\begin{align}
\alpha_1 & =\log\left(\frac{\mathbb{E}[IP(\Delta-\tau)]^2}{\sqrt{\mathbb{E}[IP(\Delta-\tau)^2]}} \right),\qquad 
\nu_1^2 = \log\left( \frac{\mathbb{E}[IP(\Delta-\tau)^2]}{\mathbb{E}[IP(\Delta-\tau)]^2} \right).
\end{align}
By applying the outcomes of Lemma \ref{th:means}, we have 
\begin{align}
\mathbb{E}[A_1] & = \mathbb{E} [P_0] \frac{e^{\mu_P (\Delta-\tau)}-1}{\mu_P} = P_0 \frac{e^{\mu_P (\Delta-\tau)}-1}{\mu_P}\\
\mathbb{E}[A_1^2] & = \mathbb{E} [P_0^2] \left( \frac{2}{\mu_P+\sigma_P^2} \left[ \frac{e^{\left(2\mu_P+\sigma_P^2\right)(\Delta-\tau)}-1}{2\mu_P+\sigma_P^2}-\frac{e^{\mu_P(\Delta-\tau)}-1}{\mu_P}   \right] \right)\\
	& = P_0^2 \left( \frac{2}{\mu_P+\sigma_P^2} \left[ \frac{e^{\left(2\mu_P+\sigma_P^2\right)(\Delta-\tau)}-1}{2\mu_P+\sigma_P^2}-\frac{e^{\mu_P(\Delta-\tau)}-1}{\mu_P}   \right] \right).
\end{align}
Hence
\begin{align}
\alpha_1 & = \log \left(\frac{P_0\left( \frac{e^{\mu_P \Delta}-1}{\mu_P} \right)^2}{\sqrt{ \frac{2}{\mu_P+\sigma_P^2} \left[ \frac{e^{\left(2\mu_P+\sigma_P^2\right)\Delta}-1}{2\mu_P+\sigma_P^2}-\frac{e^{\mu_P\Delta}-1}{\mu_P}   \right]}}\right)  \\
\nu_1^2 & = \log\left(\frac{ \frac{2}{\mu_P+\sigma_P^2} \left[ \frac{e^{\left(2\mu_P+\sigma_P^2\right)\Delta}-1}{2\mu_P+\sigma_P^2}-\frac{e^{\mu_P\Delta}-1}{\mu_P}   \right]}{\left( \frac{e^{\mu_P \Delta}-1}{\mu_P} \right)^2}\right).
\end{align}
By applying simple computation we get the outcomes for part (i). Moreover,
\begin{align}
\mathbb{E}[A_i] & = \mathbb{E} [P_{i-1}] \frac{e^{\mu_P \Delta}-1}{\mu_P} = \mathbb{E}[P_{i-2}] \; e^{\mu_P\Delta} \; \frac{e^{\mu_P \Delta}-1}{\mu_P} = \mathbb{E}[A_{i-1}]\; e^{\mu_P\Delta}\\
\mathbb{E}[A_i^2] & = \mathbb{E} [P_{i-1}^2] \left( \frac{2}{\mu_P+\sigma_P^2} \left[ \frac{e^{\left(2\mu_P+\sigma_P^2\right)\Delta}-1}{2\mu_P+\sigma_P^2}-\frac{e^{\mu_P\Delta}-1}{\mu_P}   \right] \right)\\
	& = \mathbb{E}[P_{i-2}^2] e^{\left(2\mu_P+\sigma_P^2\right)\Delta} \left( \frac{2}{\mu_P+\sigma_P^2} \left[ \frac{e^{\left(2\mu_P+\sigma_P^2\right)\Delta}-1}{2\mu_P+\sigma_P^2}-\frac{e^{\mu_P\Delta}-1}{\mu_P}   \right] \right)\\
	& = \mathbb{E}[A_{i-1}^2]\; e^{\left(2\mu_P+\sigma_P^2\right)\Delta}.
\end{align}
Then
\begin{align}
\alpha_i & = \log\left( \frac{\mathbb{E}[A_i]^2}{\sqrt{\mathbb{E}[A_i^2]}} \right) = \log\left( \frac{\mathbb{E}[A_{i-1}]^2 \left(e^{\mu_P\Delta}\right)^2}{\sqrt{\mathbb{E}[A_{i-1}^2] \; e^{\left(2\mu_P+\sigma_P^2\right)\Delta}}} \right) \\
	& = \log\left( \frac{\mathbb{E}[A_{i-1}]^2}{\sqrt{\mathbb{E}[A_{i-1}^2]}} \right) + \log\left( \frac{\left(e^{\mu_P\Delta}\right)^2}{\sqrt{e^{\left(2\mu_P+\sigma_P^2\right)\Delta}}}\right) = \alpha_{i-1} + \left( \mu_P-\frac{\sigma_P^2}{2} \right)\Delta, \\
\nu_i^2 & = \log\left( \frac{\mathbb{E}[A_i^2]}{\sqrt{\mathbb{E}[A_i]^2}} \right) = \log\left( \frac{\mathbb{E}[A_{i-1}^2] \; e^{\left(2\mu_P+\sigma_P^2\right)\Delta}}{\mathbb{E}[A_{i-1}]^2 \; e^{(\mu_P\Delta)^2}} \right)\\
	& =\log\left( \frac{\mathbb{E}[A_{i-1}^2]}{\sqrt{\mathbb{E}[A_{i-1}]^2}} \right) + \log\left( \frac{e^{\left(2\mu_P+\sigma_P^2\right)\Delta}}{e^{(\mu_P\Delta)^2}} \right) = \nu_{i-1}^2 + \sigma_P^2\Delta.
\end{align}
Conditioning to $A_{i-1}$
\begin{align}
\alpha_i & = \log\left(A_{i-1}\right) + \left( \mu_P-\frac{\sigma_P^2}{2}\right)\Delta, \\
\nu_i^2 & = \sigma_P^2\Delta,
\end{align}
which gives part (ii).

\end{proof}

\end{document}